\definecolor{royalblue}{HTML}{0000CD} %define royal blue
\declaretheorem[numberwithin=section]{theorem} %num thm wthn section
\newtheorem{corollary}{Corollary}[theorem] % use theorem counter as `parent`
\newtheorem{Lemma}{{Lemma}}
\newtheorem*{remark}{Remark}
\providecommand{\keywords}[1]
{
  \small	
  \textbf{\textit{Keywords---}} #1
}
\DeclareRobustCommand{\iscircle}{\mathord{\mathpalette\is@circle\relax}}
\newcommand\is@circle[2]{%
  \begingroup
  \sbox\z@{\raisebox{\depth}{$\m@th#1\bigcirc$}}%
  \sbox\tw@{$#1\square$}%
  \resizebox{!}{\ht\tw@}{\usebox{\z@}}%
  \endgroup
}
\begin{document}

\title{Pricing time-to-event contingent cash flows: A discrete-time
survival analysis approach
  \thanks{This material is
based upon work supported by the National Science Foundation
Graduate Research Fellowship under Grant No. DHE 1747453.
Declarations of interest: none.}}
%[omitted to preserve anonymity].}}

\author{
  Jackson P. Lautier\footnote{Department of Statistics, University of
  Connecticut}%
  \thanks{Corresponding to jackson.lautier@uconn.edu.}
  \and
  Vladimir Pozdnyakov\footnotemark[2]
  \and
  Jun Yan\footnotemark[2]
}

\date{\today}

\maketitle

\doublespacing

\begin{abstract}
Prudent management of insurance investment portfolios requires competent asset
pricing of fixed-income assets with time-to-event contingent cash flows, such
as consumer asset-backed securities (ABS).  Current market pricing techniques
for these assets either rely on a non-random time-to-event model or may not
utilize detailed asset-level data that is now available with most public
transactions.  We first establish a framework
capable of yielding estimates of the time-to-event random variable from
securitization data, which is discrete and often subject to left-truncation
and right-censoring.  We then show that the vector of discrete-time hazard rate
estimators is asymptotically multivariate normal with independent components,
which has not yet been done in the statistical literature in the case of both
left-truncation and right-censoring.  The time-to-event distribution
estimates are then fed into our cash flow model, which is capable of
calculating a formulaic price of a pool of time-to-event contingent cash flows
vis-\'{a}-vis calculating an expected present value with respect to the
estimated time-to-event distribution.  In an application to
a subset of 29{,}845 36-month leases from the Mercedes-Benz Auto Lease Trust
2017-A (MBALT 2017-A) bond, our pricing model yields estimates closer to the
actual realized future cash flows than the non-random time-to-event model,
especially as the fitting window increases. Finally, in certain settings,
the asymptotic properties of the hazard rate estimators allow investors
to assess the potential uncertainty of the price point estimates, which we
illustrate for a subset of 493 24-month leases from MBALT 2017-A.

\textbf{JEL Codes}: C14, C58, G12

\keywords{agency mortgage-backed securities, asset-level disclosures,
asset-liability management, asymptotically unbiased, incomplete data,
Reg AB II}
\end{abstract}

\section{Introduction}

Life insurers hold approximately \$670-\$770 billion in securitized assets
\citep[]{mcmenamin_2013,naic_2020}, which is nearly 16--20\% of all insurer
general account assets.  Of these securitized assets, over \$170 billion are
in asset-backed securities (ABS), or just over 5\% of all general account
holdings.  Proper asset-liability management (ALM) and general asset
management for insurers require pricing cash flows from ABS and related assets.
%Fundamentally, the challenges of pricing ABS cash flows arise from uncertainty,
%as the underlying cash flows from the pool of individual contracts are each
%time-to-event contingent (e.g., lessee optionality, default,
%prepayment, etc.).
%\jy{This paragraph has only one sentence. Could be merged to the first.}
The actuarial literature leaves ABS largely untouched, however, though there 
are numerous related contributions within general asset-liability management 
(ALM)
\citep[e.g.,][]{yao_2013, chiu_2014, zhang_2016, wei_2017, zhang_2017, li_2018, nolsoe_2020}
and credit risk
\citep[e.g.,][]{liang_2012, gatzert_2012, denuit_2015, guo_2017,
kiatsupaibul_2017, jang_2018}.

Outside an insurance context, the
literature for valuing an asset-backed security may be loosely
categorized into two alternative approaches: modeling at the pool-level or a
top-down approach and modeling at the individualized loan-level or a
bottom-up approach.  For a general introduction to each, see
\citet[][Chapters 7, 12]{davidson_2014}.
For a specific example of a top-down approach that connects portfolio level
losses to interest rates, see \citet{fermanian_2013}.  Alternatively, investors
may rely solely on input from credit rating agencies or utilize credit ratings
in combination with commercial cash flow models that do not incorporate the
potential randomness of the individual consumer credits within the trust. Both
of these approaches are considered to be inadequate when compared with a cash
flow model that incorporates the stochastic nature of underlying credit risks,
however \citep[][Chapter 8]{bluhm_2010}.  Finally, investors may rely on the
prescribed schedule provided by an ABS prospectus, which assumes underlying
cash flows occur as scheduled \citep{mercedes_2017}.  We find this
approach may be inadequate to capture true trust performance in our comparative
analysis (Section~\ref{sec:app}), however.

Recent improvements in data availability should also be
considered.  For example, with the enactment of U.S. Securities and Exchange
Commission (SEC)
Regulation AB II in November 2016 \citep{reg_ab2}, which requires issuers of
publicly traded securities to disclose pertinent asset-level demographic and
performance data, investors now have the ability to model most forms of ABS at
the loan level via a bottom-up
approach.  Indeed, most prospective ABS investors download this asset-level
information during the pricing period of a newly issued ABS bond
\citep{neilson_2022}.  Thus, to be consistent with industry best
practices, we will utilize the asset-level data to estimate the probabilistic
distributions underlying a stochastic loan-level present value cash flow model.
Its expected value is then the point estimate of the price of a single asset,
and the sum total of the individual expected value calculations for all active
time-to-event contingent assets is then the price of the complete ABS.

Specifically, our model is capable of incorporating four sources of randomness:
(1) the random time-until-contract-termination,
(2) the random number of months past the last monthly payment until the
residual is paid into the trust,
(3) the random residual value realization amount when a leased automobile is
sold to repay the trust, and, optionally,
(4) the estimator uncertainty for the time-until-contract-termination
distribution.
Of these four sources of randomness, it is critically important to achieve an
accurate estimation of the time-to-event probability distribution.

A close examination
of the estimation problem for the time-to-event random variable from ABS
investment trust data reveals that one must account for incomplete data in the
form of left-truncation and right-censoring.  There is a long history of
calculating point estimates under these circumstances \citep{tsai_1987}.
Missing in the statistical literature, however, is the asymptotic properties of
such estimates in the case of a discrete-time-to-event distribution.  Because
our data is financial and thus updates only monthly, the asymptotic results
of papers such as \citet{woodroofe_1985} and \citet{tsai_1987}, which assume a
continuous distribution function for the time-to-event random variable,
do not fully address our problem.  Further, inappropriately assuming continuous
time
is problematic because it requires assuming that two events cannot have an
identical termination time (i.e., ties have zero
probability). In a securitization pool of tens of thousands of leases with
identical contract lengths, two leases sharing the same termination age is not
only possible but an almost certainty.
Other approaches to avoid a discrete-time assumption, such as assuming
interval censoring or grouped survival data, also do not adequately address
ABS data because a payment made any time before the due date is treated the
same as a payment on the due date.  In other words, ABS data is truly discrete;
it is not just a result of measurement imprecision.

%\jy{This is an overstatement. Interest rate must be considered in any pricing
%  but we cannot allow covaraites yet. Could be removed.}
%Finally, some investors may prefer to incorporate other economic variables
%into the cash flow pricing of an ABS.  A naturally important assumption is an
%investment interest rate.  In this regard, our pricing model allows for the
%input of a complete term structure of interest rates. This may be of particular
%value in an ALM context, where such rates may be prescribed at an enterprise
%level.

This paper thus has two contributions to the actuarial literature.
The first is statistical and relates to the establishment of a precise
discrete-time framework for the underlying random variables in an
ABS setting for both random left-truncation and right-censoring
(Section~\ref{sec:prelim}) and derivation of the point estimator in this
setting including its asymptotic properties (Section~\ref{sec:est}).
The second is our proposed formulaic
pricing model that utilizes the discrete-time lease lifetime estimator of
Section~\ref{sec:est} (in conjunction with the aforementioned other
sources of randomness)
%(1) the number of months of delay between the final lessee monthly payment and
%payment of the vehicle sale price to the trust and (2) the ultimate
%realization of the vehicle sale price ---
to estimate the price of an auto-lease asset-backed security
(Section~\ref{sec:cash_flow}), which outperforms the standard prospectus
approach of \citet{mercedes_2017} (Table~\ref{tab:apv_results}).
While our application focuses on pricing the cash
flows of an auto-lease ABS loan pool, the model generalizes to other
forms of ABS, such as agency mortgage-backed securities (MBS).  To help readers
understand potential misapplications of our model, we provide a
detailed discussion of important assumptions and appropriate use cases 
in Section~\ref{subsec:understand}.  The remaining sections include a
simulation study focusing on the statistical results
(Section~\ref{sec:sim_study}), a numerical application to a subset of
29{,}845 leases from the MBALT 2017-A bond (Section~\ref{sec:app}), and
concluding remarks (Section~\ref{sec:conclusion}).
All proofs of major results and additional Section~\ref{sec:app} details may
be found in the Appendices~\ref{sec:proofs} and \ref{sec:app_detail}, 
respectively.

\section{Preliminaries}
\label{sec:prelim}

We first outline the mathematical details behind attempting
to make meaningful inference about the distribution of a discrete-time lifetime
random variable of interest from left-truncated data (a well-accepted yet nontrivial
claim on close examination).  For those interested, an expanded exposition of
the discrete-time incomplete data case of left-truncation may be found in
\citet{lautier_2021}.  For narrative convenience and given our intended
application, we will work towards defining the mathematical details within the
context of an automotive lease securitization. Next, we generalize the work of
\citet{lautier_2021} to also handle right-censored data because our subsequent
goal is to develop a model capable of pricing an actively paying ABS bond.
(Since the bond is active, there will be leases known to still be paying as of
the pricing time but with a yet unknown termination time.)
The section concludes by detailing the assumptions of our sampling procedure,
in which our data is assumed to be sampled from an already left-truncated
population --- as is the case for ABS investment trust data --- in comparison
to the unsuitable for our application ``truncate after sampling" procedure used
in \citet{woodroofe_1985} and \citet{tsai_1987}.
The rigor of this section is motivated by the continuous-time
analogs of \citet{woodroofe_1985} (left-truncation) and \citet{tsai_1987}
(left-truncation and right-censoring).

%%%%%%%%%%% TRUNCATION %%%%%%%%%%%%%%%%%

\subsection{Left-truncation}

We now begin with the details.
Let $X$ and $Y$ be two independent, positive, and integer-valued discrete
random variables, with distribution functions $F$ and $G$, respectively.
Further assume that we only observe the pairs $(X,Y)$ for which $Y \leq X$.
That is, our observed data is conditional. Hence, let $H_*$ denote the joint
distribution function of $X$ and $Y$ given $Y \leq X$, and let $F_*$ and
$G_*$ denote the marginal distribution functions of
$X$ and $Y$, respectively, given $Y \leq X$. Formally,
\begin{equation}
H_*(F,G,x,y) = \Pr(X \leq x, Y \leq y \mid X \geq Y),
\label{eq:H}
\end{equation}
is the joint conditional distribution with conditional marginal distributions
$F_*$ and $G_*$.   We include $F$ and $G$ within the notation of
$H_*$ to emphasize that equivalent $H_*$ may be constructed from different
$F$ and $G$, a technical point we now clarify.

Let the support of $F$ be $(a_F \leq x \leq b_F)$, where
$0 \leq a_F \leq b_F$ and $a_F, b_F \in \mathbb{Z}$, and let the support of $G$
be $(a_G \leq y \leq b_G)$, where $0 \leq a_G \leq b_G$ and
$a_G, b_G \in \mathbb{Z}$.
There will be complete left-truncation (full data loss) if
$a_G \geq b_F$.  Now, $H_*$ may be constructed from any pairs of $F$ and $G$
such that $(F,G) \in \mathcal{K}$, where
\begin{equation*}
\mathcal{K} = \{(F,G): F(0) = 0 = G(0), \quad \Pr(Y \leq X) > 0\}.
\end{equation*}
Difficulties may arise in the recovery of $(F,G)$ from $H_*$ because there
might exist a different pair $(F_0,G_0)$ that can generate the same $H_*$. 
That is, we have a possible \textit{identifiability} issue.

More specifically, consider the following subset of $\mathcal{K}$:
\begin{equation*}
\mathcal{K}_0 = \{(F,G) \in \mathcal{K} : a_G \leq a_F, \quad
b_G \leq b_F\}.
\end{equation*}
For any $(F,G) \in \mathcal{K}$, but $(F,G) \notin \mathcal{K}_0$, let
$F_0 = \Pr(X \leq x \mid X \geq a_G)$ and
$G_0 = \Pr(Y \leq y \mid Y \leq b_F)$.  Then $(F_0, G_0) \in \mathcal{K}_0$,
and Lemma 1 of \citet{woodroofe_1985} demonstrates $H_*(F_0, G_0) = H_*(F,G)$
for any $(F,G) \in \mathcal{K}$.  In other words, we have two
potential pairs $(F_0, G_0)$ and $(F,G)$ that lead to the same $H_*$. How,
then, can we make inference on $X$ from left-truncated data?

In most applications, we cannot.  But not all is lost.  Indeed, Theorem 1 of
\citet{woodroofe_1985} states that we can find a unique $(F_0, G_0)$
if we restrict our construction of $H_*$ to just the members of
$\mathcal{K}_0$. More formally, for every $H_*$ based on some
$(F,G) \in \mathcal{K}$, there is only one pair $(F_0, G_0) \in \mathcal{K}_0$
such that $H_*(F_0, G_0) = H_*(F,G)$, and this pair is given by $F_0$ and
$G_0$. Theorem 1 of \cite{woodroofe_1985} also shows how to recover the
cumulative hazard functions of $F_0$ and $G_0$ and therefore recover
$F_0$ and $G_0$.  Note that \citet{woodroofe_1985} assumes $F$ and $G$ are
right-continuous in his Lemma 1 and Theorem 1, and so we avoid any
discrete-time complications in using these particular results directly.

Let us now turn to our application. Let $T$ denote the
random time of a new lease contract origination.  We assume $T$ is discrete and
spans the finite range $1 \leq T \leq m$.  A realization of $T$, say $t$,
is then the initial point of the time-until-lease-contract-termination
random variable, our lifetime variable of interest, denoted by $X$.
Lease contracts have
a fixed duration, and we denote this final possible termination time to be
$\omega$, where $\omega \in \mathbb{N}$ and is finite. Since issuers of
structured debt typically have a legal obligation to the trust to select lease
contracts with a minimum history of on-time payments, the youngest least in the
trust will have some minimum age, $\Delta$, where $1 \leq \Delta \leq \omega$.

Thus, the trust begins at time $m + \Delta$, where $\Delta$ is non-random.
If we denote $Y = m + \Delta + 1 - T$, then
$\Delta + 1 \leq Y \leq m + \Delta$. Further, $Y$ represents the minimum
amount of time a lease must remain active to be observed in the trust.  Hence,
we will only observe $X$ given $X \geq Y$, and therefore $Y$ is a left-truncation
random variable.  Additionally, if we assume the time of a new lease contract
origination, $T$, is independent of the time of lease termination, $X$, then
$X$ and $Y$ will be independent.  The assumed independence
of $X$ and $T$ (and therefore $Y$) is vital to this analysis, and it may not
hold in all applications.  For additional details on the appropriateness of
this important assumption within our application, please see
Section~\ref{subsec:understand}.  For completeness, 
$\Delta + 1 \leq X \leq \omega$.

In terms of recovery, therefore, we have $a_G = \Delta + 1$,
$b_G = m + \Delta$, $a_F = \Delta + 1$, and $b_F = \omega$.  Hence, if
$\Delta > 0$,
$F_0 = \Pr(X \leq x \mid X \geq \Delta + 1) \neq F = \Pr(X \leq x)$, as leases
may terminate after one month (we assume $\Pr(X = 0) = 0$, though this need not
be the case in general applications). Thus, in the proceeding, all inference
about $X$ must be made from $F_0$.  This is the case in nearly all
data subject to random left-truncation, a subtle and perhaps overlooked nuance of
estimating distribution functions from left-truncated data.  For additional details,
see the seminal work \citet{woodroofe_1985}, or, for a discrete-case focused
discussion, \citet{lautier_2021}.  We also illustrate the difference between
$F_0$ and $F$ in the simulation study of Section~\ref{sec:sim_study}.

\subsection{Right-censoring}
%%%%%%%%%%% CENSORING %%%%%%%%%%%%%%%%%
We now introduce right-censoring. Let
$m + \Delta + 1 \leq \varepsilon \leq m + \omega$ be
the present time, at which there remain leases in the trust with ongoing
payments.  This present time, $\varepsilon$, represents the right-censoring
or pricing time. Specifically, $\Pr(T + X > \varepsilon) > 0$ and so
\begin{align*}
    X + T \leq \varepsilon & \iff X \leq m + \Delta + 1 - T + \varepsilon
    -(m + \Delta + 1)\\
    & \iff X \leq Y + \varepsilon - (m + \Delta + 1).
\end{align*}
If we define $C = Y + \varepsilon - (m + \Delta + 1)$, then it is clear
the right-censoring time is a function of the left-truncation random variable $Y$.
More precisely, $C$ equals the left-truncation time $Y$ plus a constant.
As such, it is convenient to define
\begin{equation*}
    \tau = \varepsilon - (m + \Delta + 1),
\end{equation*}
and so $C = Y + \tau = \varepsilon - T$.  If $\varepsilon > \omega + m$,
then there are no right-censored observations.

Consider now the observable range of $X$.  In the case of no left-truncation
and no right-censoring, it is clear $1 \leq X \leq \omega$; that is, the entire
distribution of $X$ is observable.  In the case of left-truncation, each lease
in the trust will have a minimum survival time of $\Delta + 1$, and so
$\Delta + 1 \leq X \leq \omega$, as we demonstrated in the previous section.
If we also include right-censoring, then each lease termination time will only be
observable if $X \leq C = Y + \tau = \varepsilon - T$. Hence, our observable
range of $X$ becomes  $\Delta + 1 \leq X \leq \min(\omega, \varepsilon - 1)$.
It is convenient to write $\xi = \min(\omega, \varepsilon - 1)$, and so
$\Delta + 1 \leq X \leq \xi$. That is, the complete finite right
tail of $X$ is estimable only if $\varepsilon - 1 \geq \omega$.  On the other
hand, if $\varepsilon - 1 < \omega$, then there is no information on the
distribution function of $X$ for $x \in \{\varepsilon, \ldots, \omega\}$.
Figure~\ref{fig:number_line} summarizes the three possible lease lifetime
data outcomes as of time $\varepsilon$: left-truncated, complete, and
right-censored.

%\jy{Use one axis, but three lines on top of the axis. Also the notation $l_s$
%  and $l_e$ are confusing. We have $T$ for $l_s$  and $X = l_e - l_s$
%  already. Then the subfigure caption can all go to the caption.}
%\vp{Agree. We should try to avoid introducing a new notation.}  

\begin{figure}[tbp]%{1\textwidth}
        \centering
		\begin{tikzpicture}
		\begin{axis}[
  		height=6cm,
  		width=15cm,
  		axis y line=none,
  		axis lines=left,
  		axis line style={-},
  		xmin=1,
  		xmax=24,
  		ymin=-5,
  		ymax=48,
  		%xlabel={\text{}},
  		%scatter/classes={o={mark=*}},
  		%restrict y to domain=0:1,
  		xtick={1, 3.5, 6, 9, 12.5, 18,24},
  		xticklabels={1, $T$, $m$, $m + \Delta$, $m+\Delta+1$, $\varepsilon$, $m+\omega$},
		]
		\addplot[color = black,mark=none,thick] expression[domain=3.66:7.34,samples=100] {15};
		\addplot[black, mark = o, mark size=2.9pt, mark options={fill=white}] coordinates {(3.5, 15)};
		\addplot[black, mark = otimes, mark size=2.9pt, mark options={fill=black}] coordinates {(7.5, 15)};
		\addplot[black,mark=none,thick] expression[domain=3.66:14.95,samples=100] {25};
		\addplot[black, mark = square, mark size=2.9pt, mark options={fill=white}] coordinates {(3.5, 25)};
		\addplot[black, mark = *,mark options={fill=black}] coordinates {(15, 25)};
		\addplot[color = black,mark=none,thick] expression[domain=3.63:18,samples=100] {35};
		\addplot[black, mark = triangle, mark size=3.5pt, mark options={fill=white}] coordinates {(3.5, 35)};
		\addplot[black, mark = |, mark size=2.9pt, mark options={fill=black}] coordinates {(18, 35)};
		\draw[black, decorate, decoration={brace, amplitude=5pt, raise=1pt}]
  		(axis cs:3.5,-3) -- (axis cs:12.5,-3) node[font=\small, pos=.5, above=5pt] 
  		{$Y$};
  		\draw[black, decorate, decoration={brace, amplitude=5pt, raise=1pt}]
  		(axis cs:3.5,5) -- (axis cs:18,5) node[font=\small, pos=.5, above=5pt] 
  		{$C$};
		\end{axis}
		\end{tikzpicture}
		\caption{Three possible lease origination lifetime data 
		outcomes at current
		time $\varepsilon$ originated at time $T$.
		Left-truncated $(\iscircle)$: 
		A lease originated at time $T$ does not
		survive until $Y = m + \Delta + 1 - T$.
		Such an outcome would not be observable to an investor. 
		Complete $(\square)$: A lease originated at time $T$ survives
		longer than $Y = m + \Delta + 1 - T$ and terminates prior to time
		$C = Y + \varepsilon - (m + \Delta + 1) \equiv Y + \tau$. 
		The complete lifetime, $X$, (the length of the line segment from
		$\square$ to $\bullet$) is observable to the investor
		(though still conditional on surviving at least $Y$ months).
		Right-censored $(\triangle)$: A lease originated at time $T$ is
		still active as of time $\varepsilon$.
		The investor observes
		$X \geq C = Y + \varepsilon - (m + \Delta + 1) \equiv Y + \tau$
		but does not observe the exact termination time, $X$.}
		%\label{fig:T_number_line}
    \label{fig:number_line}
\end{figure}
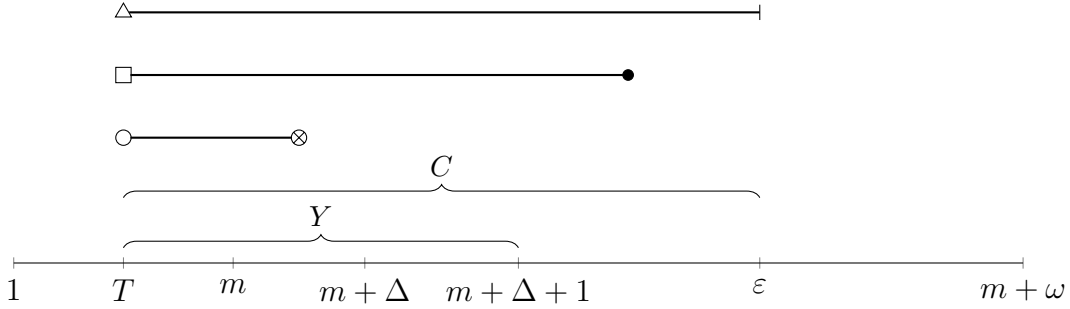
%\jpl{JY, please review new Figure 1}
%\begin{figure}[tbp]
%        \centering
%		\begin{tikzpicture}
%		\begin{axis}[
%  		height=2cm,
%  		width=10cm,
%  		axis y line=none,
%  		axis lines=left,
%  		axis line style={-},
%  		xmin=1,
%  		xmax=24,
%  		ymin=0,
%  		ymax=1,
%  		xlabel={\text{}},
%  		scatter/classes={o={mark=*}},
%  		restrict y to domain=0:1,
%  		xtick={1,6,12,18,24},
%  		xticklabels={1, $m$, $m + \Delta$, $\varepsilon$, $m+\omega$},
%  		extra x ticks={3.5,9,15,21},
%        extra x tick labels={$t_0$, $t_1$, $t_2$, $t_3$},
%        extra x tick style={
%            xticklabel style={yshift=0.5ex, anchor=south}
%        }
%		]
%		\end{axis}
%		\end{tikzpicture}
%		\caption{Lease origination $(t_0)$ and outcomes:
%		left-truncation $(t_1)$, complete $(t_2)$, right-censored $(t_3)$}
%		\label{fig:number_line}
%\end{figure}

\subsection{Sampling}
%%%%%%%%%%% SAMPLING %%%%%%%%%%%%%%%%%

We now have a description of our lifetime variable of interest, $X$, the
left-truncation random variable, $Y$, and the right-censoring random variable,
$C = Y + \tau$. In an applied setting, of course, we have observed data and,
from this data, must attempt to infer information about $X$.  Thus, how such
data may be generated or sampled from some population of independent random
vectors $(X,Y)$ such that $X$ is independent of $Y$ is of interest.
Since the securitized trust consists of
only those pairs of $(X,Y)$ such that $X \geq Y$, we assume our population
has already been left-truncated.  Hence, it is this left-truncated
population from which we are sampling $(X_i, Y_i)$ for $1 \leq i \leq n$.
Given the machinations of the securitization process,
this is more appropriate for our application than the assumed sampling process
of \citet{woodroofe_1985} or \citet{tsai_1987}, which samples $(X_i, Y_i)$ for
$1 \leq i \leq n'$, $n' \geq n$
and then removes all pairs $(X_i, Y_i)$ if $Y_i > X_i$.
Phrased differently, our process effectively samples from the
already left-truncated lease data within the trust rather than imagines we are able
to sit with the ABS issuer and see loans that did not meet the minimum lifetime
age to be included in the trust.  A theoretical divergence with limited
practical significance in most applications,
but it does indeed emerge with ABS data.

Finally, because of right-censoring, we do not observe $(X_i, Y_i)$.  Instead, for
each pair $(X_i, Y_i)$, we observe only the random variables $Y_i$,
$\min(X_i, C_i)$, where $C_i = Y_i + \tau$, and if $X_i$ was right-censored
(typically denoted by an indicator function, $\mathbf{1}_{X_i \leq C_i}$,
where $\mathbf{1}_{X_i \leq C_i} = 1$ if $X_i \leq C_i$ and equals 0
otherwise).  Our next goal, therefore, is to extract as much
information as possible about $X$ from these three random variables.

\section{Estimation}
\label{sec:est}

As is the usual approach in survival analysis, it is convenient to work in
terms of the hazard rate.  Since our application requires discrete-time on the
set of integers, we will use the following hazard rate definition
\begin{equation}
\lambda(x) = \frac{ \Pr(X = x) }{ \Pr(X \geq x) }.
\label{eq:haz_rate}
\end{equation}
The lifetime random variable, $X$, is often assumed to be continuous, and
so some readers may be more familiar with \eqref{eq:haz_rate}
expressed as a limit \citep[e.g.,][equation (2.3.1), pg. 27]{klein_2003}.
From \eqref{eq:haz_rate}, we can recover the distribution function for $X$
through
\begin{equation*}
1 - F(x-) = \Pr(X \geq x) = \prod_{\Delta + 1 \leq k < x} [1 - \lambda(k)].
\end{equation*}

It is enough, therefore, is to estimate~\eqref{eq:haz_rate}.  In building to
the discrete-time product-limit estimator subject to random left-truncation and
right-censoring, recall again our observable data: for each lease,
$1 \leq i \leq n$,
we observe the triple $(Y_i, \min(X_i, C_i), \mathbf{1}_{X_i \leq C_i})$,
where $C_i = Y_i + \tau$.
%we observe the left-truncation time $Y_i$, $\min(X_i, C_i)$, where $X_i$ represents
%the termination time of lease $i$ and $C_i = Y_i + \tau$ represents the
%right-censoring time of lease $i$, and $\mathbf{1}_{X_i \leq C_i}$, i.e., whether
%or not the observation was right-censored. \vp{It is a repetition of the last
%  paragraph of the previous section. We might simply say that we observe the following triple:
%  $(Y_i, \min(X_i, C_i),\mathbf{1}_{X_i \leq C_i})$, where $C_i = Y_i + \tau$. } 
Since $\tau$ is a constant,
which depends on the pricing time, $\varepsilon$, and the fixed times $m$ and
$\Delta$, $\tau$ is independent of any specific lease $i$, $1 \leq i \leq n$
(this a convenient property of our application, see
Section~\ref{subsec:understand} for additional details).  In other words, the
observable triple derives from the pair $(X_i, Y_i)$, $1 \leq i \leq n$, which
is an independent copy of $(X,Y)$ given $X \geq Y$ and under the assumption 
that $X$ and $Y$ are independent, and the constant, $\tau$.
%\jpl{JY, please review language.}

In the following, the subscript $\tau$
will indicate an underlying data set that has been left-truncated and
right-censored. Define
%\jy{clarify in the last paragraph that $(X_i, Y_i)$ are independent copies of
%  $(X, Y)$. Then here only needs one equation.}
%\begin{equation}
%f_{*, \tau}(x) = \Pr(X_i = x, X_i \leq C_i) =
%\Pr(X = x, X \leq C \mid X \geq Y),
%\label{eq:f_star}
%\end{equation}
%and
%\begin{equation}
%U_{\tau}(x) = \Pr(Y_i \leq x \leq \min(X_i, C_i)) =
%\Pr(Y \leq x \leq \min(X,C) \mid X \geq Y).
%\label{eq:c_tau}
%\end{equation}
$\alpha = \Pr(X \geq Y)$,
\begin{align}
f_{*, \tau}(x) = \Pr(X_i = x, X_i \leq C_i)
	&= \Pr(X = x, X \leq C \mid X \geq Y) \nonumber\\
    &= \frac{ \Pr(X = x) \Pr(Y \leq x \leq C) }{ \alpha }, \label{eq:f_star}
\end{align}
and
\begin{align}
U_{\tau}(x) = \Pr(Y_i \leq x \leq \min(X_i, C_i))
    &= \Pr(Y \leq x \leq \min(X,C) \mid X \geq Y) \nonumber\\
    &= \frac{ \Pr(Y \leq x \leq C) \Pr(x \leq X) }{ \alpha }
    \label{eq:c_tau}.
\end{align}
Therefore,
\begin{align*}
\frac{ f_{*, \tau}(x) }{ U_{\tau}(x) }
    =\bigg[ \frac{ \Pr(X = x) \Pr(Y \leq x \leq C) }{ \alpha } \bigg]
    \bigg[ \frac{ \alpha }{ \Pr(Y \leq x \leq C) \Pr(x \leq X) } \bigg]
    %&= \frac{ \Pr(X = x) }{ \Pr(X \leq x) }\\
    = \lambda_{\tau}(x).
\end{align*}
See Section 2 of \cite{lautier_2021} for an extended discussion of why having
$U_{\tau}(x)$ in the denominator is not a concern (i.e., it is nonzero).

\begin{remark}
We have been assuming $C = Y + \tau$, where
$\tau = \varepsilon - (m + \Delta + 1)$, a constant.  However, the results
hold more generally if $C=f(Y)$, where $f$ is a Borel function and $C \geq Y$
almost surely.
\end{remark}

Since \eqref{eq:f_star} and \eqref{eq:c_tau} are directly estimable from
the data via
\begin{equation*}
\hat{f}_{*, \tau, n}(x) =
\frac{1}{n} \sum_{i=1}^{n}
\mathbf{1}_{X_i \leq C_i} \mathbf{1}_{\min(X_i,C_i) = x},
\quad \text{and} \quad
\hat{U}_{\tau,n}(x) =
\frac{1}{n} \sum_{i=1}^{n} \mathbf{1}_{Y_i \leq x \leq \min(X_i,C_i)},
\end{equation*}
we have the natural estimator for \eqref{eq:haz_rate} as follows:
\begin{equation}
\hat{\lambda}_{\tau,n}(x) =
\frac{ \hat{f}_{*, \tau, n}(x) }{ \hat{U}_{\tau,n}(x)} =
\frac{ \sum_{i=1}^{n}
\mathbf{1}_{X_i \leq C_i} \mathbf{1}_{\min(X_i,C_i) = x} }
{ \sum_{i=1}^{n} \mathbf{1}_{Y_i \leq x \leq \min(X_i,C_i)}}.
\label{eq:est_haz}
\end{equation}
The discrete-time point estimator we have derived under the
preliminary conditions of Section~\ref{sec:prelim} in \eqref{eq:est_haz}
coincides directly with \citet{tsai_1987}.  Notably, under ``minor technical
restrictions" on the support space of $X$ and $Y$, \citet{tsai_1987} state that
\eqref{eq:est_haz} is the nonparameteric conditional maximum likelihood
estimator of $\lambda_{\tau}$.
%\jpl{Also mention we show it's the MLE for left-trunc. only?}
Further, it is not difficult to show that \eqref{eq:est_haz} is the same
point estimator as in Section 18.4.3 of \cite{dickson_2020} and Section 12.1 of
\citet{klugman_2012}.  We prefer the indicator representation because of its
natural relationship to computational programming, however, which facilitates
applications.

Also of interest is the asymptotic properties of \eqref{eq:est_haz}.  In
\citet{tsai_1987}, the authors provide the asymptotic properties of
\eqref{eq:est_haz}, but they assume a continuous survival function in doing so.
Hence, we cannot apply the asymptotic results of \citet{tsai_1987} to the
discrete space we carefully defined in Section~\ref{sec:prelim}.
Our main theoretical contribution is thus the asymptotic properties of the
vector of estimators
$\hat{ \bm{\Lambda} }_{\tau,n} =
(\hat{\lambda}_{\tau,n}(\Delta + 1),
\ldots,
\hat{\lambda}_{\tau,n}(\xi))^{\top}$
under the discrete assumptions of Section~\ref{sec:prelim}.
Specifically, we show that $\hat{ \bm{\Lambda} }_{\tau,n}$ is asymptotically normal
with independent components and unbiased for the vector of true hazard rates,
$\bm{\Lambda}_{\tau} =
(\lambda_{\tau}(\Delta + 1), \ldots, \lambda_{\tau}(\xi))^{\top}$.  We state this
formally in Theorem~\ref{thm:haz_norm} and provide a complete proof in
Appendix~\ref{sec:asym_proof}.

%%%%%%%%%%%%%%%%%%%%%%%%%%%%
%%%%%%%%%%%%%%%%%%%%%%%%%%%%
% Lambda Norm Theorem
%%%%%%%%%%%%%%%%%%%%%%%%%%%%
%%%%%%%%%%%%%%%%%%%%%%%%%%%%
\begin{theorem}[$\hat{\bm{\Lambda}}_{\tau,n}$ Asymptotic Properties]
Define $\hat{\bm{\Lambda}}_{\tau,n} = \big( \hat{\lambda}_{\tau,n}(\Delta+1),
\ldots, \hat{\lambda}_{\tau,n}(\xi)\big)^{\top}$, where $\hat{\lambda}_{\tau,n}$
follows from \eqref{eq:est_haz}.  Then,
\begin{enumerate}[label=(\roman*)]
	\item \begin{equation*}
			\hat{\bm{\Lambda}}_{\tau, n}
			\overset{\mathcal{P}}{\longrightarrow}
			\bm{\Lambda}_{\tau}, \text{ as } n \rightarrow \infty;
		  \end{equation*}
	\item \begin{equation*}
    			\sqrt{n}( \hat{\bm{\Lambda}}_{\tau,n} - \bm{\Lambda}_{\tau})
    			\overset{\mathcal{L}}{\longrightarrow} N(0, \bm{\Sigma}),
    			\text{ as } n \rightarrow \infty,
		  \end{equation*}
\end{enumerate}
where $\bm{\Lambda}_{\tau} =
\big( \lambda_{\tau}(\Delta + 1), \ldots, \lambda_{\tau}(\xi)\big)^{\top}$ with
$\lambda_{\tau} = f_{*,\tau} / U_{\tau}$ and
\begin{equation*}
    \bm{\Sigma} = \textup{diag} \bigg(
    \frac{ f_{*, \tau}(\Delta+1) \{U_{\tau}(\Delta+1) - f_{*,\tau}(\Delta+1)\}}
    {U_{\tau}(\Delta+1)^3}, \ldots,
    \frac{ f_{*, \tau}(\xi) \{U_{\tau}(\xi) -
    f_{*,\tau}(\xi)\}}{U_{\tau}(\xi)^3} \bigg).
\end{equation*}
That is, the estimators $\hat{\lambda}_{\tau,n}(\Delta + 1), \ldots,
\hat{\lambda}_{\tau,n}(\xi)$ are consistent, asymptotically normal,
and independent.
\label{thm:haz_norm}
\end{theorem}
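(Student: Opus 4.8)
The plan is to recognize that each coordinate estimator in \eqref{eq:est_haz} is a ratio of two sample means of bounded, i.i.d.\ indicator random variables, and then to deploy the multivariate central limit theorem together with the delta method. For a fixed $x \in \{\Delta+1, \ldots, \xi\}$, write $A_i(x) = \mathbf{1}_{X_i \leq C_i}\mathbf{1}_{\min(X_i,C_i)=x}$ and $B_i(x) = \mathbf{1}_{Y_i \leq x \leq \min(X_i,C_i)}$, so that $\hat{f}_{*,\tau,n}(x) = n^{-1}\sum_i A_i(x)$ and $\hat{C}_{\tau,n}(x) = n^{-1}\sum_i B_i(x)$ with $E[A_i(x)] = f_{*,\tau}(x)$ and $E[B_i(x)] = C_{\tau}(x)$ by \eqref{eq:f_star} and \eqref{eq:c_tau}. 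Stacking these over all $x$ yields a single i.i.d.\ vector $\mathbf{W}_i \in \mathbb{R}^{2(\xi-\Delta)}$ per observation.

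For part (i), I would apply the weak law of large numbers coordinatewise to obtain $\hat{f}_{*,\tau,n}(x) \overset{\mathcal{P}}{\to} f_{*,\tau}(x)$ and $\hat{C}_{\tau,n}(x) \overset{\mathcal{P}}{\to} C_{\tau}(x)$, and then invoke the continuous mapping theorem on the ratio. This is legitimate because $C_{\tau}(x) > 0$ on the estimable range (as noted following \eqref{eq:c_tau}), so the map $(f,C)\mapsto f/C$ is continuous at the relevant limit; joint convergence of the finite-dimensional vector is then automatic.

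For part (ii), the multivariate CLT gives $\sqrt{n}$-asymptotic normality of the stacked vector of sample means, centered at $(f_{*,\tau}(x), C_{\tau}(x))_x$, with covariance $\mathbf{V} = \mathrm{Cov}(\mathbf{W}_i)$ assembled from the indicator cross-moments. I would then apply the multivariate delta method with $g\big((f_x,C_x)_x\big) = (f_x/C_x)_x$, whose Jacobian $J$ has $x$-th row $\nabla g(x) = (1/C_{\tau}(x),\, -f_{*,\tau}(x)/C_{\tau}(x)^2)$ supported on the two coordinates belonging to $x$; the limiting covariance is then $\bm{\Sigma} = J\,\mathbf{V}\,J^{T}$. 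The diagonal entries follow quickly using $A_i(x)^2=A_i(x)$, $B_i(x)^2=B_i(x)$, and the identity $A_i(x)B_i(x)=A_i(x)$ — since an uncensored observation equal to $x$ automatically satisfies $Y_i \leq x \leq \min(X_i,C_i)=x$ — which gives $\mathrm{Cov}(A_i(x),B_i(x)) = f_{*,\tau}(x) - f_{*,\tau}(x)C_{\tau}(x)$ and, after simplification, the stated value $f_{*,\tau}(x)\{C_{\tau}(x)-f_{*,\tau}(x)\}/C_{\tau}(x)^3$.

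The hard part will be establishing that the off-diagonal entries of $\bm{\Sigma}$ vanish, i.e.\ the asymptotic independence. For $x < x'$ I would compute the four relevant cross-covariances directly. Two are immediate: $E[A_i(x)A_i(x')]=0$, since one observed uncensored time cannot equal two distinct values, and $E[A_i(x)B_i(x')]=0$, since $\min(X_i,C_i)=x$ precludes $x' \leq \min(X_i,C_i)$ when $x<x'$. The remaining two, $E[B_i(x)A_i(x')]$ and $E[B_i(x)B_i(x')]$, reduce via the independence of $X$ and $Y$ and the relation $C=Y+\tau$ to $\alpha^{-1}\Pr(X=x')\Pr(x'-\tau \leq Y \leq x)$ and $\alpha^{-1}\Pr(X \geq x')\Pr(x'-\tau \leq Y \leq x)$ respectively, where $\alpha = \Pr(X \geq Y)$. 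Substituting these into $J\,\mathbf{V}\,J^{T}$, the four contributing terms cancel in two pairs; the second cancellation hinges on the identity $f_{*,\tau}(x')\Pr(X\geq x')/C_{\tau}(x') = \Pr(X=x')$, which is precisely the factorization of $f_{*,\tau}$ and $C_{\tau}$ through the shared truncation factor $\Pr(x'-\tau \leq Y \leq x')$. I expect verifying this cancellation — rather than the CLT or delta-method machinery — to be the only genuinely delicate step, and the resulting diagonality of $\bm{\Sigma}$ is exactly what yields the asymptotic-independence conclusion.
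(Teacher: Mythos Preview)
Your proposal is correct and lands on the same core computation as the paper, but the packaging differs. The paper does not apply the delta method to the stacked vector $(A_i(x),B_i(x))_x$; instead it uses the exact algebraic identity
\[
\hat{\lambda}_{\tau,n}(k)-\lambda_{\tau}(k)
=\bigl[\hat{C}_{\tau,n}(k)\,C_{\tau}(k)\bigr]^{-1}\,\frac{1}{n}\sum_{i=1}^{n}Z_{\tau,k(i)},
\qquad Z_{\tau,k(i)}=C_{\tau}(k)A_i(k)-f_{*,\tau}(k)B_i(k),
\]
applies the multivariate CLT directly to $\bar{\mathbf{Z}}_{\tau,n}$, and then invokes Slutsky via a preliminary lemma giving $\hat{C}_{\tau,n}(k)\overset{\mathcal{P}}{\to}C_{\tau}(k)$. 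Your delta-method linearization produces precisely $Z_{\tau,k(i)}/C_{\tau}(k)^{2}$, so the two routes compute the \emph{same} four-term cross-covariance expansion and rely on the \emph{same} cancellation identity $f_{*,\tau}(x')\Pr(X\geq x')/C_{\tau}(x')=\Pr(X=x')$ for the off-diagonal vanishing. The paper's route is slightly more elementary (no Taylor remainder to absorb, only CLT\,+\,Slutsky) and keeps the argument entirely in terms of the scalar $Z$'s; your route is more systematic and would generalize more readily to smooth functionals beyond ratios. Either is a complete proof.
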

Under suitable conditions, Theorem~\ref{thm:haz_norm} allows risk managers
to account for the variability of the estimators $\hat{\bm{\Lambda}}_{\tau,n}$,
and we provide an illustrative example in Section~\ref{subsec:quant_est}.
Lastly, we again wish to emphasize that we can make meaningful inference about
$F$ from $F_0$ only; we cannot recover $F$.  We demonstrate the effects of
left-truncation and right-censoring on our ability to recover the tails of $X$
(as well as the asymptotic properties of Theorem~\ref{thm:haz_norm})
in the simulation study of Section~\ref{sec:sim_study}.  Meaningful
inference on $X$ is still possible, however, which is the main contribution of
the related statistical literature.

\section{Cash flow model}
\label{sec:cash_flow}

We first introduce the model within the context of a consumer auto-lease
asset-backed security.  Next, we provide the major financial result of this
work in the formulaic estimators for an expected present
value of a pool of time-to-event contingent contracts over a monthly time horizon
of the investor's choice.  The section closes with a digression to emphasize
the model's important assumptions and considerations before generalizing it to
other forms of ABS.  To assess its performance in a realistic application,
readers may proceed to Section~\ref{sec:app}.

\subsection{Pricing model}

Our objective is to calculate the present value or price of future cash flows
from a trust of consumer automobile lease contracts.  For generality, suppose
the present time is
$m + \Delta + 1 \leq \varepsilon \leq m + \omega$.  This implies the trust is
ongoing with payment history but not yet terminated.  We will elucidate our model
by building up from a single lease contract to the complete trust.  Recall
there are $n$ total lease contracts at origination and define $n_{\varepsilon}$ 
to be the number of active lease contracts at time~$\varepsilon$.
%Let there be
%\jy{$n$ is already defined so we just need to remind the readers.}
%$n$ total lease contracts at origination and $n_{\varepsilon}$ active lease c
%ontracts at time~$\varepsilon$.  
Naturally, $n_{\varepsilon} \leq n$.
We consider a lease $i$ that is still active and paying at time
$\varepsilon$, where $1 \leq i \leq n_{\varepsilon}$.

Suppose that the age of this lease contract $i$ at time $\varepsilon$ is
$\Delta + 1 \leq x_{\varepsilon(i)} \leq \xi$.  For lease $i$, denote the
monthly contractual payment as $c_i$, the contract residual value as $v_i$, the
random month of termination as $X_i$ where
$x_{\varepsilon(i)} \leq X_i \leq \xi$, the $k$th month
spot rate as $r_k$ for $1 \leq k \leq X_i - x_{\varepsilon(i)} + 1$, the sale
time multiplicative scalar random variable given $X_i$ as $Z_{X_i}$, and the
random number of months past the point of the final monthly lease payment until
the vehicle is sold and the trust is repaid given $X_i$ as $D_{X_i}$, where
$D_{X_i}$ is an integer over the range
$\{0, 1, \ldots, \min(d_{\text{max}}, X_i - x_{\varepsilon(i)} + 1)\}$.
Here, $d_{\text{max}} \leq \xi$ is a finite positive integer.  Define
the present value of the monthly lease payments as
\begin{equation*}
    W_i(X_i, D_{X_i})
    = \sum_{j=1}^{X_i - D_{X_i} - x_{\varepsilon(i)} + 1}
    \frac{c_i}{(1+r_j)^j},
\end{equation*}
and the present value of the contractual residual payment as
\begin{equation*}
    R_i(X_i) =
    \frac{ v_i }
    {(1 + r_{X_i - x_{\varepsilon(i)}+1})^{X_i - x_{\varepsilon(i)}+1}}.
\end{equation*}
Then, the present value (PV) at time $\varepsilon$ of the future payments
for lease $i$ is
\begin{equation}
    \text{PV}_i = W_i(X_i, D_{X_i}) + R_i(X_i) Z_{X_i}.
    \label{eq:PV}
\end{equation}

\begin{remark}
Let us connect \eqref{eq:PV} to the inherent lessee optionality embedded in
an automobile lease contract.  In the event the lessee elects to purchase the
vehicle at contract termination, $D_{X_i} = 0$, and so there is no gap
between the final monthly payment and the large residual payment.  In the
event $D_{X_i} = 0$, then $Z_{X_i} \approx 1$, as the purchase price is likely
very close to~$v_i$.
On the other hand, if the lessee declines to purchase the vehicle, the dealer
must sell the automobile to repay the trust.  In this case, we expect some
delay and so $D_{X_i} > 0$.  Further, it is also likely in this case
$Z_{X_i} \neq 1$.
In this sense, given a lease contract termination time of~$X_i$, we can
interpret $\Pr(D_{X_i} = 0)$ as the probability a lessee elects to purchase
the automobile at contract termination.
\end{remark}

We assume all payments are received at the end-of-the-reporting-period.
The quantities $c_i$, $v_i$, and $x_{\varepsilon(i)}$ are known for lease $i$.
Within a lease contract, the purchase price
of the automobile is set at onset.  In total, the randomness of
$\text{PV}_i$ follows from the randomness of the lease contract termination
time, $X_i$, the random residual realization, $Z_{X_i} v_i$, and the random
delay time between receipt of the final monthly payment and the residual,
$D_{X_i}$.
In Section~\ref{subsec:quant_est}, we will also incorporate a fourth component
of randomness in the form of the statistical estimation error of the
distribution of $X$ (valid in certain situations, see
Section~\ref{subsec:understand} for details).

It may be illustrative to connect the notation of \eqref{eq:PV} with
two realized lease contract cash flows from
the the Mercedes-Benz Auto Lease Trust (MBALT) 2017-A consumer automobile
lease asset-backed security \citep{mercedes_2017}, which will be introduced
more completely in Section~\ref{sec:app}.  As such, we have summarized two
realized lease cash flows in Table~\ref{tab:samp_life}
over the first eight months of the
securitization.  For example asset number~1, we
have $x_{\varepsilon(i)} = 30$, $c_i = 739$, $v_i = 36{,}383$, $X_i = 37$,
$D_{X_i} = 3$, and $Z_{X_i} = 30{,}690 / 36{,}383 = 0.844$.  Similarly,
for example asset number~2
we have $x_{\varepsilon(i)} = 31$,
$c_i = 678$, $v_i = 32{,}376$, $X_i = 36$, $D_{X_i} = 1$, and
$Z_{X_i} = 24{,}576 / 32{,}376 = 0.759$.

The present value of the complete trust at time $\varepsilon$
then follows from \eqref{eq:PV} as
\begin{equation}
    \text{PV}_{\text{Trust}} = \sum_{i = 1}^{n_{\varepsilon}} \text{PV}_i.
    \label{eq:pv_trust}
\end{equation}

In the following, we present steps to calculate the price of such a trust
vis-\`{a}-vis taking an expectation of \eqref{eq:pv_trust}, perhaps more
commonly known to some readers as computing the actuarial present value (APV).

\begin{center}
\begin{table}[tbh]
    \centering
    \begin{tabular}{ccccccccc}
    & \multicolumn{4}{c}{Ex. Asset Num: 1} & %566896705041642
    \multicolumn{4}{c}{Ex. Asset Num: 2}\\%164070499410617}\\
    \cmidrule(lr){1-1} \cmidrule(lr){2-5} \cmidrule(lr){6-9}
    Obs. Month & Age & Pmt & Resid. & Con. Resid. &
    Age & Pmt & Resid. & Con. Resid.\\
    \cmidrule(lr){1-1} \cmidrule(lr){2-5} \cmidrule(lr){6-9}
    1 & 30 & 1{,}478 & 0 & -- & 31 & 1{,}357 & 0 & --\\
    2 & 31 & 739 & 0 & -- & 32 & 678 & 0 & --\\
    3 & 32 & 739 & 0 & -- & 33 & 678 & 0 & --\\
    4 & 33 & 739 & 0 & -- & 34 & 678 & 0 & --\\
    5 & 34 & 739 & 0 & -- & 35 & 678 & 0 & --\\
    6 & 35 & 739 & 0 & -- & 36 & 0 & 24{,}576 & 32{,}376\\
    7 & 36 & 0 & 0 & -- & -- & -- & -- & --\\
    8 & 37 & 0 & 30{,}690 & 36{,}383 & -- & -- & -- & --\\
    \cmidrule(lr){1-1} \cmidrule(lr){2-5} \cmidrule(lr){6-9}
    \end{tabular}
    \caption{MBALT 2017-A sample life cash flows}
    \label{tab:samp_life}
\end{table}
\end{center}

\subsection{Expected or actuarial present value}

Throughout this section assume $r_k$ is the deterministic spot rate for
month $1 \leq k \leq \xi$.  It may be generated stochastically from an
interest rate model or other economic scenario generator, but the monthly
rate shall be treated as an user input discount assumption.

In taking an expectation of \eqref{eq:pv_trust}, we will need the probabilities
$\Pr(X_i = s \mid X \geq x_{\varepsilon(i)})$, which we denote
$p_{x_{\varepsilon(i)}}^s$, for
$x_{\varepsilon(i)} \leq s \leq \xi$.  The probabilities
$p_{x_{\varepsilon(i)}}^s$ may be defined in terms of the
hazard rate, i.e., \eqref{eq:haz_rate}.  Precisely,
\begin{align*}
    \Pr(X_i = s \mid X_i \geq x_{\varepsilon(i)})
    &= \Pr(X = s \mid X \geq x_{\varepsilon(i)})\\
    &= \lambda_{\tau}(s) \prod_{x_{\varepsilon(i)} \leq k \leq s-1}
    [1 - \lambda_{\tau}(k)],
\end{align*}
where we use the convention
\begin{equation*}
    \prod_{x_{\varepsilon} \leq k \leq s-1}
    [1 - \lambda_{\tau}(k)] = 1,
\end{equation*}
if $s = x_{\varepsilon}$.  It is possible that the number of months beyond
age $x_{\varepsilon(i)}$ that a lease contract terminates will be less than
$d_{\text{max}}$.  In this case, we will load all possible delay time
probabilities beyond $X_i - x_{\varepsilon(i)} + 1$ for a given $X_i$ onto
$\Pr(D = X_i - x_{\varepsilon(i)} + 1 \mid X = X_i)$.  Formally, we
define
\begin{equation*}
    \Pr^*(D_{X_i} = k) =
    \Pr(D = k \mid X = X_i) +
    \mathbf{1}_{k = X_i - x_{\varepsilon(i)} + 1}
    \bigg( \sum_{k = X_i + x_{\varepsilon(i)} + 2}^{d_{\text{max}}}
    \Pr(D = k \mid X = X_i) \bigg).
\end{equation*}
Finally, let us use the notation
$\varphi = \min( d_{\textup{max}}, X_i - x_{\varepsilon(i)} + 1)$. We are
thus ready to state the major result of this section, with its proof in
Appendix~\ref{subsec:thm_apv}.

%%%%%%%%%%%%%%%%%%%%%%%%%%%%%%%%%%%%%%%%%%%%%%%%%%%%%
%%%%%%%%%%%%%%%%%%%%%%%%%%%%%%%%%%%%%%%%%%%%%%%%%%%%%
% THEOREM STATEMENT
%%%%%%%%%%%%%%%%%%%%%%%%%%%%%%%%%%%%%%%%%%%%%%%%%%%%%
%%%%%%%%%%%%%%%%%%%%%%%%%%%%%%%%%%%%%%%%%%%%%%%%%%%%%
\begin{theorem}
Assume the framework of Sections~\ref{sec:prelim} and \ref{sec:est}.
Suppose the present time is
$\varepsilon$, where $m + \Delta + 1 \leq \varepsilon \leq m + \omega$,
and we have a collection of $n_{\varepsilon}$ time-to-event contingent cash
flows streams that are still active and paying following the individual model
\eqref{eq:PV} and the aggregate model \eqref{eq:pv_trust}.
Call the collection of these random cash flow streams the Trust.  Denote the
lifetime random variable of interest for lease $i$,
$1 \leq i \leq n_{\varepsilon}$, by $X_i$. Then the actuarial present value \textup{(APV)}
of the Trust is
	\begin{equation}
    		\textup{APV}_{\textup{Trust}} = \sum_{i = 1}^{n_{\varepsilon}} \textup{APV}_i,
    		\label{eq:APV_trust}
	\end{equation}
	where
	\begin{equation*}
    \textup{APV}_i = \sum_{m=x_{\varepsilon(i)}}^{\xi}
    \bigg(
    \bigg{ \{ }
    \sum_{k = 0}^{\varphi }
    W_i(m,k) \Pr^*(D_{m} = k)
    \bigg{ \} }
    + R_i(m) \mathbf{E}(Z \mid X = m)
    \bigg) p_{x_{\varepsilon(i)}}^{m}.
    \end{equation*}
%If $X_i$ is independent of $X_j$, $1 \leq i \neq j \leq n$ then
%\begin{enumerate}[label=(\roman*), leftmargin=*]
%\item The actuarial present value of the Trust is
%	\begin{equation}
%    		\textup{APV}_{\textup{Trust}} = \sum_{i = 1}^{n} \textup{APV}_i,
%    		\label{eq:APV_trust}
%	\end{equation}
%	where
%	\begin{equation*}
%    \textup{APV}_i = \sum_{m=x_{\varepsilon(i)}}^{\xi}
%    \bigg(
%    \bigg{ \{ }
%    \sum_{k = 0}^{\varphi }
%    W_i(m,k) \Pr^*(D_{m} = k)
%    \bigg{ \} }
%    + R_i(m) \mathbf{E}(Z \mid X = m)
%    \bigg) p_{x_{\varepsilon(i)}}^{m};
%    \end{equation*}
%\item The variance of present value of the Trust is
%	\begin{equation}
%    		\textup{Var}[\textup{PV}_{\textup{Trust}}] =
%    		\sum_{i=1}^{n} \textup{Var}[\textup{PV}_i],
%    		\label{eq:var_trust}
%	\end{equation}
%	where
%	\begin{equation*}
%	    \textup{Var}(\textup{PV}_i) =
%	    \sum_{m = x_{\varepsilon(i)}}^{\xi}
%	    \Upsilon_m p_{x_{\varepsilon(i)}}^{m} -
%	    (\textup{APV}_i)^2,
%	\end{equation*}
%	and
%	\begin{align*}
%	    \Upsilon_m &{}=
%	    \bigg( \sum_{k=0}^{\varphi} W_i(m,k)^2 \Pr^*(D_{m} = k)  \bigg)\\
%	    &+
%	    2 R_i(m) \mathbf{E}(Z \mid X = m)
%	    \bigg( \sum_{k=0}^{\varphi} W_i(m,k) \Pr^*(D_{m} = k) \bigg) +
%	    R_i(m)^2 \mathbf{E} (Z^2 \mid X = m).
%	\end{align*}
%\end{enumerate}
\label{thm:apv_varapv}
\end{theorem}

In a practical setting, the underlying distributions of the random quantities
in Theorem~\ref{thm:apv_varapv} will need to be estimated. If we assume
independence between $X$ and the left-truncation random variable $Y$ ---
a non-trivial assumption that we
discuss more fully in Section~\ref{subsec:understand} --- then the results of
Section~\ref{sec:est} may be used to estimate the recoverable portion of the
distribution for the time-until-contract-termination, $X$.  Furthermore,
as we demonstrated in Theorem~\ref{thm:haz_norm}, the estimator
\eqref{eq:est_haz} will be asymptotically unbiased.  This suggests that the
use of the estimators $\hat{\lambda}_{\tau,n}$ in place of the true hazard
rates in Theorem~\ref{thm:apv_varapv} along with asymptotically unbiased
estimates for $\displaystyle \Pr^*(D_{X_i} = k)$, $0 \leq \varphi$ and
$\mathbf{E}(Z \mid X = m)$, $\Delta + 1 \leq m \leq \xi$ (such as standard
empirical estimates), will yield a
close approximation for the true expected present value \eqref{eq:APV_trust}
of a trust of time-to-event contingent cash flows for large $n$ (for the sake
of clarity, we repeat here that $n$ is the number of leases active at the
origination of the trust at time $m + \Delta$, whereas $n_{\varepsilon}$ is
the number of leases active at time $\varepsilon$). If the
assumption of mutual independence between $(X_i, Y_i)$ and $(X_j, Y_j)$ for
$1 \leq i \neq j \leq n$ is also satisfied --- another non-trivial
assumption discussed more in Section~\ref{subsec:understand} --- then we
may also assess the inherent uncertainty of the price point estimator of
Theorem~\ref{thm:apv_varapv} (see Section~\ref{subsec:quant_est}).

In many financial applications it may be desirable to calculate
a present value for a fixed amount of time, such as over the next six months
or one year.  The equations of Theorem~\ref{thm:apv_varapv} may be easily
modified to do so.  Let the present value time horizon in months be denoted
by $\kappa$.  That is, if we desire to calculate the present value over the
next 12 months only, we would set $\kappa = 12$.  To illustrate, define
\begin{equation*}
    W_i^*(X_i, D_{X_i}) =
    \sum_{j=1}^{\min(\kappa, X_i - D_{X_i}-x_{\varepsilon(i)} + 1)}
    \frac{ c_i }{(1 + r_j)^j},
\end{equation*}
let $\mathbf{1}_{\kappa}$ be an
indicator function that equals 1 if
$\kappa \leq X_i - x_{\varepsilon(i)} + 1$ and zero otherwise, and
observe the following corollary stated without proof.

\begin{corollary}
Assume the conditions of Theorem~\ref{thm:apv_varapv}. Then the
\textup{(APV)} of the Trust over the next $\kappa$
months only, where $\kappa \in \mathbb{N}$, is
\begin{equation}
    		\textup{APV}_{\textup{Trust}}^{\kappa} =
    		\sum_{i = 1}^{n_{\varepsilon}} \textup{APV}_i^{\kappa},
    		\label{eq:APV_trust_kappa}
	\end{equation}
	where
	\begin{equation*}
    \textup{APV}_i^{\kappa} = \sum_{m=x_{\varepsilon(i)}}^{\xi}
    \bigg(
    \bigg{ \{ }
    \sum_{k = 0}^{\varphi }
    W_i^*(m,k) \Pr^*(D_{m} = k)
    \bigg{ \} }\\
    + \mathbf{1}_{\kappa}
    R_i(m) \mathbf{E}(Z \mid X = m)
    \bigg) p_{x_{\varepsilon(i)}}^{m}.
    \end{equation*}
%\item The variance of present value of the Trust over the next
%$\kappa$ months only is
%	\begin{equation}
%    		\textup{Var}_{\kappa}[\textup{PV}_{\textup{Trust}}] =
%    		\sum_{i=1}^{n} \textup{Var}_{\kappa}[\textup{PV}_i],
%    		\label{eq:var_trust}
%	\end{equation}
%	where
%	\begin{equation*}
%	    \textup{Var}_{\kappa}(\textup{PV}_i) =
%	    \sum_{m = x_{\varepsilon(i)}}^{\xi}
%	    \Upsilon_m^{\kappa} p_{x_{\varepsilon(i)}}^{m} -
%	    (\textup{APV}_i^{\kappa})^2,
%	\end{equation*}
%	and
%	\begin{align*}
%	    \Upsilon_m^{\kappa} &{}=
%	    \bigg( \sum_{k=0}^{\varphi} W_i^*(m,k)^2 \Pr^*(D_{m} = k)  \bigg)\\
%	    &+
%	    2 R_i(m) \mathbf{1}_{\kappa} \mathbf{E}(Z \mid X = m)
%	    \bigg( \sum_{k=0}^{\varphi} W_i^*(m,k) \Pr^*(D_{m} = k) \bigg)\\
%	    &+
%	    R_i(m)^2 \mathbf{1}_{\kappa} \mathbf{E} (Z^2 \mid X = m).
%	\end{align*}
%\end{enumerate}
\label{cor:cap_APV}
\end{corollary}

\subsection{Remarks on assumptions and generalizations}
\label{subsec:understand}

We digress to discuss the plausibility of two important assumptions of
independence underlying the results of Sections~\ref{sec:prelim},
\ref{sec:est}, and~\ref{sec:cash_flow}, and the potential to generalize the
model to other securitization asset classes outside of consumer
automobile-lease ABS.

We begin with the two assumptions of independence.
The first important assumption was introduced in
the estimation framework of Section~\ref{sec:est} and corresponds to
the independence between the lifetime random variable, $X$, and the
left-truncation random variable, $Y$.  If we recall our motivation, however,
$Y$ is a shifted random variable stemming from $T$, which is the origination
time random variable of an auto lease contract (see
Figure~\ref{fig:number_line} as needed).  So, the first question is
this: is it reasonable to assume $X$ and $T$ are independent?  We believe the
answer is affirmative for two reasons.  First, the total sample space of $T$
is generally over a relatively short time period, (e.g., less than three years
--- see Section~\ref{sec:app}).  Thus,
ceteris paribus, it is unlikely that a lease originated a short time away
from a second lease would have a materially different lifetime distribution.
Second, issuers of asset-backed securities are using securitization as a
financing tool for business needs (i.e., writing loans and leases to sell
more cars).  Hence, the decision of when to issue an ABS is typically driven
by market factors that are connected to the parent company, such as financing
needs and current market rates, rather than connected to the underlying
performance of the leases.  Indeed, most standard techniques of estimating
a survival curve require independence between the lifetimes of interest, the
right-censoring random variable, and the left-truncation random variable
\citep[][Chapter 3]{klein_2003}.  In our specific application, however, we
may use the independence between $T$ and $X$ to achieve independence between
$X$ and $Y$, and from the independence of $X$ and $Y$ follows the independence
between $X$ and $C$ (see Section~\ref{sec:prelim} as needed).  This is a
potentially unique advantage to our financial application that may not be
common in widespread applications, and we emphasize again the statistical
integrity of the estimation framework collapses if we lose independence
between $X$ and $Y$.

The second important assumption is independence between two given leases
within the trust to estimate the lease lifetime distribution with
\eqref{eq:est_haz} for the cash flow model of Section~\ref{sec:cash_flow}.
More precisely, this is independence between the pairs $(X_i, Y_i)$ and
$(X_j, Y_j)$ for $1 \leq i \neq j \leq n$. In some sense, this assumption is
more difficult to parse. To explain, we paraphrase the famous opening line
of Tolstoy's masterpiece \textit{Anna Karenina}: all independent
random variables are independent in the same way, but all dependent random
variables are dependent in their own way.  In other words, how does one form
a sense of dependence between two lessees?  This question is quite difficult
to answer.  Instead, we propose to proceed by (1) identifying when assuming
independence between two lease's lifetimes may be reasonable and (2) how the
model will falter if we assume independence between two leases but there is in
actuality dependence.

In light of the subprime mortgage crisis of the late 2000s, we would advise
caution in applying our estimation procedure to a pool of subprime credit
quality borrowers (i.e., a credit score below 620 \citep{cfpb_2019}). While
consensus around the widespread failures within the subprime mortgage crisis
is that mortgage defaults were largely driven by a deterioration in borrower
credit quality that was masked by appreciating home values rather than from
faulty assumptions within consumer credit models \citep{demyanyk_2009}, it
is reasonable to presume that a large scale economic event would lead to an
increased level of default in subprime borrowers.  But what of high-credit
quality borrowers leasing high-end luxury cars, such as those in the
MBALT 2017-A lease pool that is our focus in Section~\ref{sec:app}?
In this case, we feel it is reasonable to assume these lessees generally
operate with mutual economic independence. To
justify this claim, note that net losses as a percentage of average dollar
amount of lease contracts outstanding for the Mercedes-Benz aggregate lease
portfolio did not exceed 0.40\% between 2016 and the first three months of
2021, which includes the economic event of the Coronavirus pandemic
\citep{mercedes_2021}.

%\jy{Needs VP proofread}
What if there is dependence between the pairs $(X_i, Y_i)$ and
$(X_j, Y_j)$ for $1 \leq i \neq j \leq n$, but we have incorrectly assumed
independence?  As long as the assumption of independence between the
lifetime of interest, $X$, and the left-truncation random variable, $Y$
remains valid for all $i \in \{1, \ldots, n\}$ and the dependence among
the observations is weak enough to allow the Central Limit Theorem to work,
we expect that the asymptotic unbiasedness of the vector of point estimators,
$\hat{ \bm{\Lambda} }_{\tau, n}$, to remain, and,
subsequently, so does the asymptotic unbiasedness of the point estimator
in Theorem~\ref{thm:apv_varapv} (or
Corollary~\ref{cor:cap_APV}). What changes is the uncertainty level or variance
of these estimators (e.g., $\bm{\Sigma}$ in Theorem~\ref{thm:haz_norm}).
In practice, when the sample size is big enough, the uncertainty in the
estimation of distribution will be negligible compared to the uncertainty of the
distribution itself, so its effect on a pricing exercise will be minimal.

We close this section with discussion of the appropriateness of applying
our model to other forms of ABS.  That is,
our focus is consumer automobile-lease asset-backed securities;  can the
model generalize to other asset classes of ABS?
We feel for certain asset classes, the answer is affirmative.  Certainly,
for other types of leased assets, such as equipment lease ABS, the model
extends naturally.  Other generalizations are possible with minimal
changes to the model of Section~\ref{sec:cash_flow}.  For example,
a substantial portion of asset-backed securities fall under the category
of agency mortgage-backed securities (MBS), which are issued by the
government-sponsored entities
Fannie Mae, Freddie Mac, and Ginnie Mae.  The total issuance of such debt
is over \$5 trillion, which is approximately 10\% of all credit market debt
in the United States \citep{tuckman_2012}, and of which life insurers hold
nearly \$250 billion agency MBS \citep{mcmenamin_2013}. Investors do not
bear risk of non-payment of principal or interest on the full outstanding
balance. Instead, the major risk to investors is related to the timing of
cash flows \citep{davidson_2014}.  In other words, the major risk of agency
asset-backed securities connects directly to the lifetime random variable,
$X$, as defined within \eqref{eq:PV}, and so we feel an application of our
model to agency MBS is appropriate.

When modeling a consumer loans more generally, such as student loans,
auto loans, residential mortgages, or credit cards, however,
it may be preferable to treat prepayments and defaults separately. As
currently constructed, the model of Section~\ref{sec:cash_flow} can only
handle one time-to-event outcome.  That said, we believe it may be generalized
to a competing risk environment by updating the estimator in
\eqref{eq:est_haz} and editing the details of \eqref{eq:PV} for the
various competing risk outcomes (i.e., receipt of outstanding balance
in a prepayment or recovery given default), though the details of which
remain an ongoing investigation.

\section{Simulation Study}
\label{sec:sim_study}

We present a simulation study for two purposes.  First, we will
demonstrate that the full distribution for a lifetime random variable may
not be recoverable if the estimation is performed using incomplete data.
Formally, we will see the distinction between $F$ and $F_0$, as
previously discussed in Section \ref{sec:prelim}, which can have
meaningful implications for cash flow analysis.  In some
instances, however, the contractual terms of the underlying financial
products may be used to make assumptions to partially address the
challenges of incomplete data.  We provide an example of this within our
simulation study as an illustration.  The second purpose of
out simulation study is to verify the asymptotic properties stated in Theorem
\ref{thm:haz_norm} (the complete proof may be found in
Appendix~\ref{sec:asym_proof}).

Let the lease origination random variable $T$ follow
a discrete uniform random variable over the integers
$\mathcal{T} = \{1, \ldots, 10\}$.  Using the notation from
Section~\ref{sec:prelim}, therefore, $m = 10$.  Let $\Delta = 2$ and so the the
left-truncation random variable $Y$ is discrete uniform over the integers
$\mathcal{Y} = \{3, \ldots, 12\}$.  We proceed as the though the
current time is $\varepsilon = 20$, which implies
$\tau = \varepsilon - (m + \Delta + 1) = 7$ (see Figure~\ref{fig:number_line}
as needed).
Consider now the lifetime of interest random variable, $X$, which
follows a left-truncated geometric distribution over
$\mathcal{X} = \{1, \ldots, 24\}$ with probability mass
function (pmf)
\begin{equation}
    \Pr(X = x) = \begin{cases}
    p(1-p)^{x-1}, & x = 1, 2, \ldots, 23;\\
    \sum_{x=24}^{\infty}p(1-p)^{x-1}, & x = 24;\\
    0, & \text{otherwise},
    \end{cases}
    \label{eq:trunc_geom_X}
\end{equation}
where $0 < p < 1$.  For this pmf, we set $p = 0.2$ and so $\lambda(x) = 0.2$
for $1 \leq x \leq 23$ and $\lambda(x) = 1$ for $X = 24$.
(To aid readers attempting to reproduce these results,
we calculated $\alpha = 0.2856$.)
The pmf in \eqref{eq:trunc_geom_X} implies $\omega = 24$, and so the complete
distribution of $X$ is not recoverable.  In other words, because of left-truncation
and right-censoring, we may only form estimates for
$\Delta + 1 = 3 \leq X \leq 19 = \xi = \min(\omega, \varepsilon - 1)$.

In an application to financial data, however, we may be able to reference
contractual terms that provide the basis to infer a minimum value of $\omega$,
from which we can make a reasonable assumption.  For example, in estimating the
lifetime random variable for a 24-month lease contract, we should assume that
$\omega \geq 24$.  Thus, to continue with this example, since $\xi = 19 < 24$,
we suggest to extend the estimated hazard rate for $\xi$ forward
geometrically until $24$, at which point the hazard rate should then be assumed
to be unity.  Extending the last observation assuming a geometric tail is a
common practice in survival analysis
\citep[see, for example, a discussion in the continuous case in Section 12.1 of][]{klugman_2012}.
We generated $n = 10{,}000$ pairs of
left-truncated random variables using \eqref{eq:H}.  We then calculated
$\hat{\lambda}_{\tau,n}$ using \eqref{eq:est_haz} for
$X \in \{3, \ldots, 19\}$.  This process was repeated for $1{,}000$
replicates.

Figure~\ref{fig:sim_result} shows the average of the estimated hazard function
from the $1{,}000$ replicates.  We can see
that $\hat{\bm{\Lambda}}_{\tau,n}$ (dashed line) is very close to the true
$\bm{\Lambda}_{\tau} = (0.2, \ldots, 0.2)^{\top}$ (solid line) for the
recoverable range of $X$, $3 \leq X \leq 19$.
We also plot the average 95\% true log-scale confidence interval using
Theorem~\ref{thm:haz_norm} by the shaded ribbon, its estimate using the
estimators $\hat{f}_{*, \tau, n}$ and $\hat{U}_{\tau,n}$ in place of
$f_{*,\tau}$ and $U_{\tau}$, respectively, in $\bm{\Sigma}$ by the dashed line,
and the
empirical 2.5th and 97.5th percentiles of the 1{,}000 replicates of
estimators for each recoverable $X$.  All three closely agree.  We also see
the hazard rate (and thus the associated probabilities) for
$X = 1, 2, 20, 21, 22, 23, 24$ are not recoverable due to left-truncation and
right-censoring.

\begin{figure}[tbh]
    \centering
    \includegraphics[width=\textwidth]{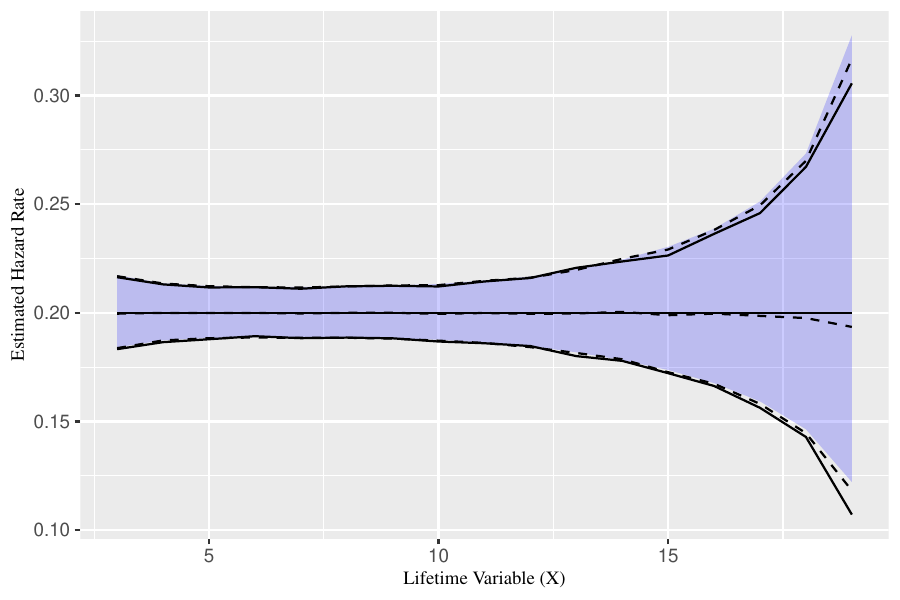}
    \caption{Theorem~\ref{thm:haz_norm} left-truncation and right-censoring
    simulation study results: true quantities (solid line), empirical
    averages of 1{,}000 replicates of hazard rate estimates and corresponding
    empirical 95\% log-scale confidence intervals (dashed lines), and the
    true 95\% log-scale confidence interval using Theorem~\ref{thm:haz_norm}
    (blue ribbon).  All three closely agree.}
    \label{fig:sim_result}
\end{figure}

We also verified Theorem~\ref{thm:apv_varapv} and Corollary~\ref{cor:cap_APV}
through a simulation study, but the results have been omitted in light of the
proof in Appendix \ref{subsec:thm_apv}.  For interested readers, please contact
the corresponding author for details.

\section{Application}
\label{sec:app}

We now demonstrate the effectiveness of the cash flow modeling and pricing
apparatus introduced in Section~\ref{sec:cash_flow} in a realistic setting.
Specifically, we will consider a large subset of lease data from
the Mercedes-Benz Auto Lease Trust (MBALT)
2017-A consumer automobile lease asset-backed security
\citep{mercedes_2017}.  Because of the aforementioned SEC Regulation AB
II enacted in November 2016 \citep{reg_ab2}, investors may now obtain
detailed loan-level borrower and monthly loan performance data through the
Electronic Data Gather, Analysis, and Retrieval (EDGAR) system maintained
by the SEC.  For additional reference, \citet{cfr_229} provides a
detailed listing of available fields.
MBALT 2017-A was placed in April of 2017 and closed in August of 2019. We
thus have 28 months of loan performance and cash flow data.  For
convenience, we have made the downloaded and cleaned data file available in
the online supplemental material.

The MBALT 2017-A transaction originally contained 56{,}402 lease
contracts on Mercedes-Benz automobiles with original terms ranging
from 24 to 60 months, the vast majority of which are 36 month leases
(47{,}315). The credit profile of a substantial portion of underlying
lessees is \textit{super-prime}, which refers to a consumer credit score
above 720 \citep{cfpb_2019}.  Nearly the entire pool of lessees would be
classified as a \textit{prime} credit, which refers to a consumer credit
score above 660 \citep{cfpb_2019}.  Further, the majority of automobiles
represent high-end or luxury vehicles.  To see this, consider that the
original vehicle value averages over just over \$61{,}600.
For a density plot of the lessee credit profile and vehicle values,
see Figure~\ref{fig:m17_cs_vv}.  As another indicator
of the low credit risk of this transaction,
between 2012 and 2016, net losses as a percentage of average dollar amount
of lease contracts outstanding for the Mercedes-Benz lease portfolio have
ranged between 0.19\% and 0.27\% \citep{mercedes_2017}.  We thus feel the
MBALT 2017-A asset-backed security is a good candidate for the proposed
model in Section~\ref{sec:cash_flow}, particularly in light of the
discussion in Section~\ref{subsec:understand}.

%\jy{make each panel in the plot more or less square in size}
\begin{figure}[tbh]
    \centering
    \includegraphics[width=\textwidth]{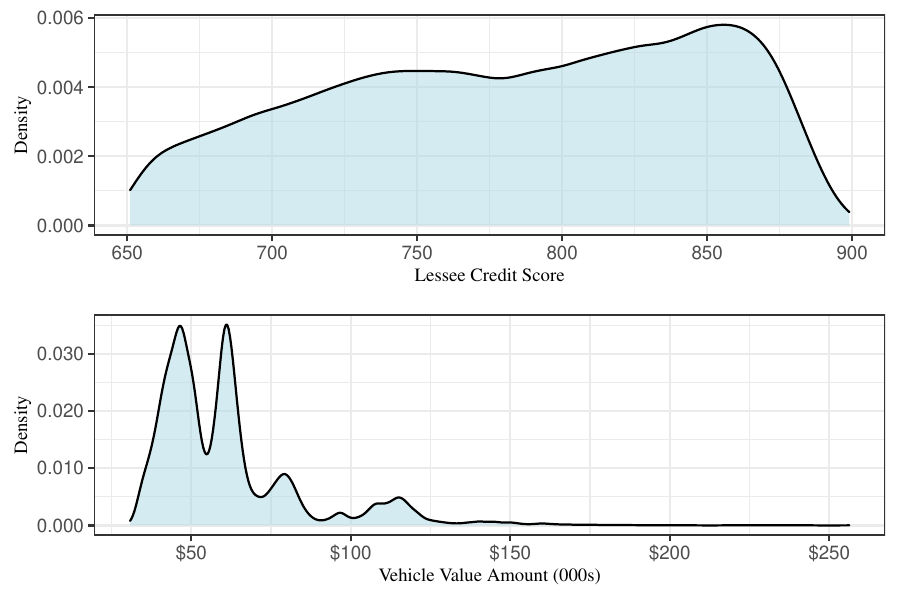}
    \caption{MBALT 2017-A credit and vehicle value distributions}
    \label{fig:m17_cs_vv}
\end{figure}

\subsection{Pricing results}
\label{subsec:price}

For the remainder of the section, we will focus on a subset of the 47{,}315
36-month leases.  Specifically, we removed 36-month lease contracts with
data irregularities that could not be easily explained from a print out of
cash flows or thorough review of data field descriptions.  The first such
irregularity was unclear or multiple residual payments.  It is our opinion that
some of the cash flows reported in the residual field
(\texttt{liquidationProceedsAmount}) represent monthly payment cash flows,
and they may be mislabeled as a form of bookkeeping convenience or error.
To evaluate the potential of the cash flow model in
Section~\ref{sec:cash_flow} and avoid introducing potentially erroneous
interpretations of the data, we have elected to remove such records. In total,
this removed 16{,}741 contracts.  In addition, there is a data field called
\texttt{terminationIndicator}, and, in an additional 729 contracts, this field
does not correspond to the month the final residual payment was received.
These records were also removed.
This leaves a total of 29{,}845 lease contracts. For an example of records
removed in the form of Table~\ref{tab:samp_life}, please see
Appendix~\ref{sec:app_detail}.
We recommend fitting a different set of hazard rate estimators
\eqref{eq:est_haz} for each original lease termination length (i.e., 24-month
leases should be fit separately from 36-month leases, and so on) because it is
prudent to assume the prior information of the termination schedule will have
a material impact on the underlying lifetime distribution.  To
price the complete trust, one may then add all different lease term groups
together.  For illustrative purposes, however, we will focus on the subset of
29{,}845 36-month lease contracts, which we will refer to as ``the Trust"
going forward.

The oldest lease in the Trust at initialization was 33 months old,
and the youngest lease was 3 months old.  Therefore, in the notation of
Section~\ref{sec:prelim}, we have $\Delta = 3$ and $m = 30$. We will denote
the lifetime random variable, $X$, as the time-until-contract-termination
(i.e., the time the final residual payment is made to the Trust, see
Table~\ref{tab:samp_life} as needed).  We will first use the methods of
Section~\ref{sec:est} to estimate the distribution for $X$. To mimic the
realities of pricing an active security, we will assume an
\textit{observation window} of 6, 12, 18, and 24 months.  By an observation
window, we mean that we will only use data from the first $\mathcal{O}$ months
of securitization payments to estimate $X$, where
$\mathcal{O} \in \{6, 12, 18, 24\}$. If we make the connection to
Figure~\ref{fig:number_line}, than an observation window of 6 months
corresponds to $\varepsilon = 39$.  Estimates for
the hazard rates of $X$ by observation window length may be found in
Figure~\ref{fig:m17_hazrate}.  We can see that the hazard rate
accelerates close to month 36, which is expected for a lease contract with a
scheduled termination of 36 months.  It is also interesting to see the effect
of left-truncation and right-censoring.  For example, we cannot recover the distribution
of $X$ before lease age 4 months.  Given the pattern of the hazard rate in
Figure~\ref{fig:m17_hazrate}, however, it is reasonable to assume the absence of
months 1-3 will have a small impact on the overall results.  In addition, we
can see that as the observation window expands and less observations are
right-censored, the right tail of $X$ gradually extends well beyond month 36.  
For the shortest observation window of 6 months, we observe terminations up
to month 38, and, since $38 \geq 36$, we assume $\xi = 38$ in this case.
Also of interest may be the resulting confidence intervals from
Theorem~\ref{thm:haz_norm}, which are denoted by the blue ribbons in
Figure~\ref{fig:m17_hazrate}.  As we can see, the extended right tails have
fewer observations and thus more estimator uncertainty than the early portions
of the distribution of $X$.

\begin{figure}[tbh]
    \centering
    \includegraphics{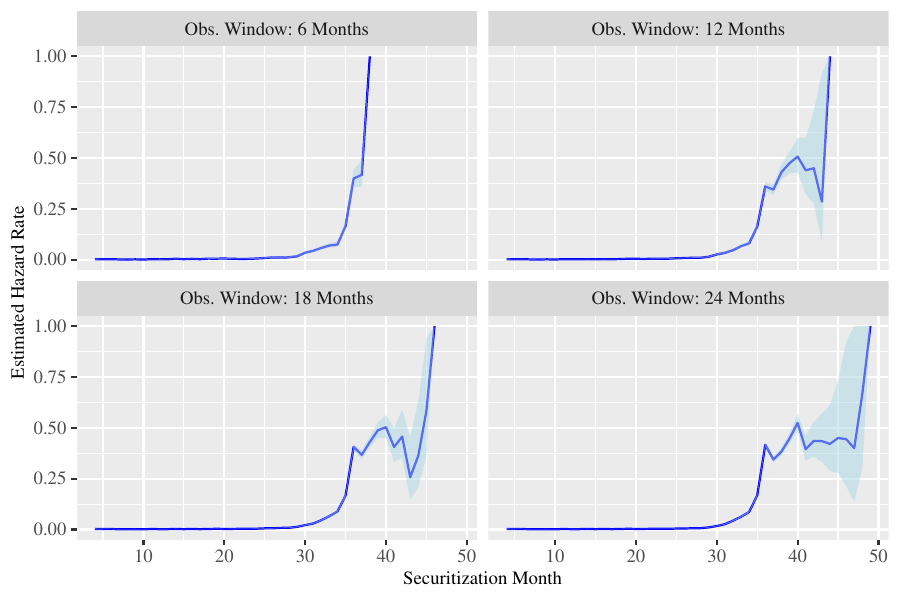}
    \caption{MBALT 2017-A hazard rate est.\ and 95\% confidence intervals
    (36-Mo.\ leases).}
    %\jy{The panels have the same xlim and ylim, so they can be shared.}}
    \label{fig:m17_hazrate}
\end{figure}

To determine the pricing results, we calculate empirical estimates for
$\mathbf{E}(Z \mid X = m)$ and $\displaystyle \Pr^*(D_m = k)$ for
$k \in \{0, \ldots, \varphi\}$ and $m$ spanning the recoverable sample space
of $X$, using only the observations available within the observation window.
Note that to account for the small portion of defaults, we consider the
difference between total residual realizations
(\texttt{liquidationProceedsAmount}) and default losses or
\textit{charge-offs} (\texttt{chargedOffAmount}) when estimating the final
residual payment made to the trust.
Our objective is to use Corollary~\ref{cor:cap_APV} for various values of
$\kappa$ and compare the calculated price against the actual realized
observations from the Trust.  In addition, we will also compare the results of
Corollary~\ref{cor:cap_APV} against a variation of the modeling approach used
in the MBALT 2017-A pricing prospectus \citep[][Appendix B]{mercedes_2017};
that is

\begin{quote}
Modeling Assumption: The cash flow schedules appearing in the immediately
following tables were generated assuming that (i) the lessees make their
remaining lease payments starting in March 2017 and every month thereafter
until all scheduled lease payments are made and (ii) the residual value of the
Leased Vehicles is due the month following the last related lease payment.
\citep[][pg. B-1]{mercedes_2017}
\end{quote}

Specifically, we will compare the results of Corollary~\ref{cor:cap_APV}
against an approach that uses a non-random $X$ (i.e., all leases terminate at
month 36 or immediately thereafter if older) but still uses the empirical
estimates for $\mathbf{E}(Z \mid X = m)$.  In other words, can our formulaic
approach that accounts for left-truncation and right-censoring improve upon the
non-random $X$ pricing assumptions of a newly issued auto-lease ABS?  In
Table~\ref{tab:apv_results}, we refer to non-random $X$ as the
\textit{Contract} approach.
To produce the comparisons, we utilized three interest-rate discount scenarios.
The first is a simple zero-interest scenario, the second is a standard
scenario, and the last is an inverted scenario.  The exact spot rates used in
each scenario may be found in Appendix~\ref{sec:app_detail}.

Based on the results in Table~\ref{tab:apv_results}, we can see that
Corollary~\ref{cor:cap_APV} generally outperforms the contract approach and
provides an accurate estimate of the value of future cash flows.  More
specifically, Corollary~\ref{cor:cap_APV} is quite accurate over the
short-term, even for an observation window of only 6 months.  As the
observation window increases, the results of Corollary~\ref{cor:cap_APV}
improve over a longer horizon as well, until they eventually are inside the
Contract approach for all time horizons by an observation window of 12 months.
As the observation window size increases, Corollary~\ref{cor:cap_APV} begins
to significantly outperform the Contract approach.  The results hold
generally across the three interest rate scenarios.  Though not reported within
this manuscript, we also note that the results are consistent with
Table~\ref{tab:apv_results} for a subset of 24-month leases from the MBALT
2017-A transaction.

Because the results of Corollary~\ref{cor:cap_APV} can be computed via a
formula and do not require extensive simulation, we feel the ability to improve
upon the non-random transaction prospectus approach
\citep[][Appendix B]{mercedes_2017} --- especially as the observation window
increases --- at limited additional effort to be an
advantage of our approach.  Furthermore, the ability to estimate a fully
specified asymptotic distribution of the hazard rate estimators using
Theorem~\ref{thm:haz_norm} --- under the mutual independence assumption of
$(X_i, Y_i)$ and $(X_j, Y_j)$ for $1 \leq i \neq j \leq n$, which we feel is
satisfied for the MBALT 2017-A pool of leases ---
allows for an assessment of the potential
uncertainty of the pricing point estimates of Corollary~\ref{cor:cap_APV}. As
such, we provide a demonstration in the following section.

\begin{sidewaystable}
\centering
	{ \footnotesize
    \begin{tabular}{cccccrrcccrrcccrr}
      \toprule
    \multicolumn{2}{c}{} &
    \multicolumn{5}{c}{Zero Interest Rates} &
    \multicolumn{5}{c}{Standard Yield Curve} &
    \multicolumn{5}{c}{Inverted Yield Curve}\\
      \cmidrule(lr){3-7} \cmidrule(lr){8-12} \cmidrule(lr){13-17}
 Obs. Win. &$\kappa$&Con.&APV&Act.&Con.[\%]&APV[\%]&Con.&APV&Act.&Con.[\%]&APV[\%]&Con.&APV&Act.&Con.[\%]&APV[\%]\\
      \midrule
    \multirow{8}{*}{6 mo.}
&3&75&101&102&$-26.47$&$-$0.98&74&99&101&$-$26.73&$-$1.98&70&95&97&$-$27.84&$-$2.06\\
&6&169&210&205&$-$17.56&2.44&163&203&198&$-$17.68&2.53&153&190&186&$-$17.74&2.15\\
&9&258&317&310&$-$16.77&2.26&242&298&292&$-$17.12&2.05&225&277&271&$-$16.97&2.21\\
&12&359&436&420&$-$14.52&3.81&325&395&382&$-$14.92&3.40&302&367&355&$-$14.93&3.38\\
&15&466&567&553&$-$15.73&2.53&405&493&481&$-$15.80&2.49&378&461&450&$-$16.00&2.44\\
&18&629&716&671&$-$6.26&6.71&518&596&563&$-$7.99&5.86&489&562&530&$-$7.74&6.04\\
&21&770&860&792&$-$2.78&8.59&606&687&639&$-$5.16&7.51&581&656&609&$-$4.60&7.72\\
&22&804&911&837&$-$3.94&8.84&626&717&666&$-$6.01&7.66&602&688&638&$-$5.64&7.84\\
    \midrule
    \multirow{6}{*}{12 mo.}
&3&71&95&105&$-$32.38&$-$9.52&70&94&104&$-$32.69&$-$9.62&67&90&99&$-$32.32&$-$9.09\\
&6&159&204&215&$-$26.05&$-$5.12&154&197&208&$-$25.96&$-$5.29&144&185&195&$-$26.15&$-$5.13\\
&9&258&327&348&$-$25.86&$-$6.03&242&306&326&$-$25.77&$-$6.13&224&284&302&$-$25.83&$-$5.96\\
&12&408&463&466&$-$12.45&$-$0.64&364&417&423&$-$13.95&$-$1.42&338&387&392&$-$13.78&$-$1.28\\
&15&539&598&587&$-$8.18&1.87&462&519&513&$-$9.94&1.17&431&484&479&$-$10.02&1.04\\
&16&571&645&632&$-$9.65&2.06&485&552&545&$-$11.01&1.28&453&516&510&$-$11.18&1.18\\
      \midrule
    \multirow{4}{*}{18 mo.}
&3&66&108&133&$-$50.38&$-$18.80&65&106&131&$-$50.38&$-$19.08&62&102&125&$-$50.40&$-$18.40\\
&6&194&238&251&$-$22.71&$-$5.18&187&229&243&$-$23.05&$-$5.76&174&215&228&$-$23.68&$-$5.70\\
&9&317&370&371&$-$14.56&$-$0.27&296&347&351&$-$15.67&$-$1.14&273&322&326&$-$16.26&$-$1.23\\
&10&347&417&416&$-$16.59&0.24&322&386&389&$-$17.22&$-$0.77&297&358&361&$-$17.73&$-$0.83\\
      \midrule
    \multirow{2}{*}{24 mo.}
&3&84&118&120&$-$30.00&$-$1.67&83&116&119&$-$30.25&$-$2.52&79&111&114&$-$30.70&$-$2.63\\
&4&110&163&166&$-$33.73&$-$1.81&108&160&162&$-$33.33&$-$1.23&103&152&154&$-$33.12&$-$1.30\\
      \bottomrule
    \end{tabular}
    }
    \caption{MBALT 2017-A Corollary~\ref{cor:cap_APV} pricing results (APV)
    comparison to prospectus approach \citep{mercedes_2017} (Con.) and actual
    realized cash flows (Act.) including percentage differences for a pool of
    29{,}845 36-month lease contracts.  The results
    of Corollary~\ref{cor:cap_APV} generally fall well within the prospectus
    approach, especially over short pricing horizons and as the observation
    window increases.  
    The various discount interest rate curves 
    may be found in Appendix~\ref{sec:app_detail}.  All figures not in 
    percentages are in millions.}
    \label{tab:apv_results}
\end{sidewaystable}

%\jy{Put negative signs in tables in math mode. Use the sanitize option in xtable.}

\subsection{Quantifying estimator uncertainty}
\label{subsec:quant_est}

In this section, we will illustrate how to use Theorem~\ref{thm:haz_norm} to
quantify the estimator uncertainty of price point estimates made with 
Corollary~\ref{cor:cap_APV}.  Before proceeding, we first indicate that the
mutual independence assumption of $(X_i, Y_i)$ and $(X_j, Y_j)$ for 
$1 \leq i \neq j \leq n$ is likely satisfied for the MBALT 2017-A pool of 
leases (see Section~\ref{subsec:understand} as needed).  As a reminder, if
an investor believed such an assumption was not satisfied, the results of this
section may not be valid.

To obtain wider confidence intervals for illustrative purposes, we will
consider 24-month lease contracts, which make up a smaller portion of
MBALT 2017-A (866 leases out of a total of 50{,}402).  The smaller sample
is for illustrative purposes only; this analysis scales without issue.
As in Section~\ref{subsec:price}, we filtered the 866 
24-lease contracts for data irregularities, which left 493 
24-month lease contracts (see Appendix~\ref{sec:app_detail} as needed).

Figure~\ref{fig:APV_dist} presents the resulting 95\% confidence intervals for 
the hazard rate estimators using Theorem~\ref{thm:haz_norm} for these 493
24-month lease contracts (compare with the 36-month contracts in 
Figure~\ref{fig:m17_hazrate}).  The confidence intervals suggest that the 
hazard rate estimate random variables may reasonably fall
within such intervals, which will naturally flow through the calculations in
Corollary~\ref{cor:cap_APV}.  From the work in Theorem~\ref{thm:apv_varapv},
we can specify the exact distribution of the vector of hazard rate estimators,
$\hat{\bm{\Lambda}}_{\tau, n}$, and use simulation to assess the potential
variance of our APV estimates.  That is, suppose we have a
given observation window, say 6 months.  Our procedure is as follows:
\begin{enumerate}[label={[\arabic*]}]
\item Determine $\hat{\bm{\Lambda}}_{\tau, n}$ using \eqref{eq:est_haz};
\item Estimate $\bm{\Sigma}$ from Theorem~\ref{thm:haz_norm} using the
estimators $\hat{f}_{*, \tau, n}$ and $\hat{U}_{\tau,n}$ defined in
\eqref{eq:est_haz};
\item Use the delta-method \citep[][Theorem 5.3.5, pg. 261]{nitis_2000} to
find the log-adjusted multivariate normal distribution (to ensure the
confidence intervals for the hazard rates fall within 0 and 1);
\item For each of the desired number of replicates, simulate a realization of
hazard rates from the multivariate normal distribution in [3] and then
proceed to calculate $\text{APV}^{\kappa}_{\text{Trust}}$ from
Corollary~\ref{cor:cap_APV} with these simulated hazard rates;
\item Assess the potential estimation error though the distribution of stored
results from [4].
\end{enumerate}

We have done exactly this in Figure~\ref{fig:APV_dist} for an observation
window of 6 months, $\kappa = 3$, 4{,}000 replicates, for the sample of
493 24-month leases from the MBALT 2017-A transaction described above. It is
noteworthy the randomness of the hazard rate estimators can influence the
pricing calculation of Corollary~\ref{cor:cap_APV} by a range of over \$1M,
especially as the price was calculated to be approximately \$5.4M. We
also see the actual realization for these three months does fall within the
range of simulated estimates.  For this particular comparison, therefore,
we would say that the estimation process of Section~\ref{sec:est} in
combination with the cash flow model of Section~\ref{sec:cash_flow} was able
to correctly predict the future cash flows.

We close this section with a remark on interpreting the results of
Figure~\ref{fig:APV_dist}.
As the sample size increases, we would expect the variability of the estimator
to decrease by Theorem~\ref{thm:haz_norm}, which will lead to a more precise
point estimate of the price using Corollary~\ref{cor:cap_APV}.  This should not
be conflated with less risk inherent in the future cash flows, however.  The
variability from the cash flows within \eqref{eq:PV} is due to the randomness
of the lifetime random variable, $X$, and the delay and residual random
variables, $D_X$ and $Z_X$, respectively.  The volatility of these random
variables will depend on the distributional estimates produced by the
underlying data and not on the variability of the estimators themselves
(though the latter may be of interest, too).  Our process has produced an
estimation and pricing process that we can expect to be asymptotically
unbiased; it does not suggest that the cash flow risk will decline as $n$
grows.

\begin{figure}[tbh]
    \centering
    \includegraphics{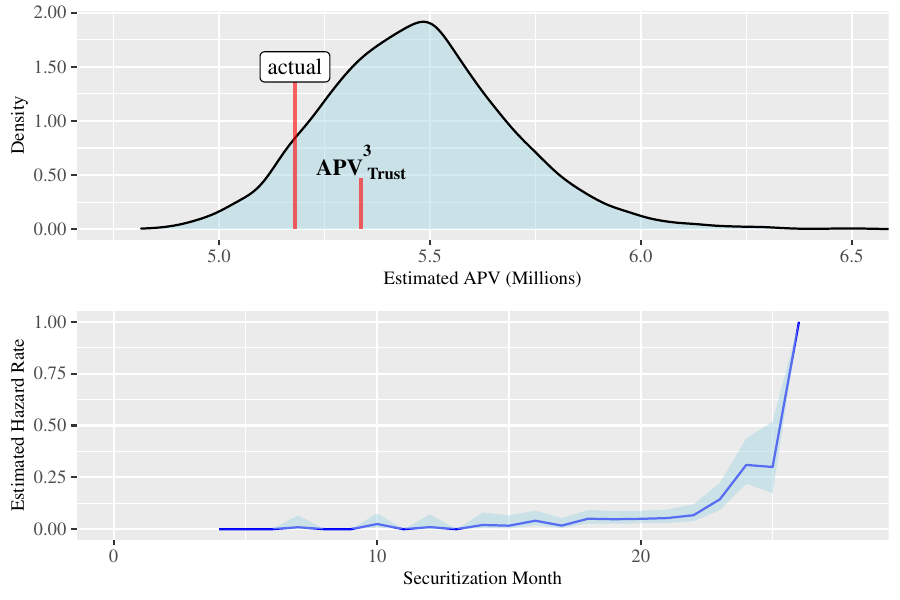}
    \caption{Illustrative example of Corollary~\ref{cor:cap_APV} 
    used to quantify estimator uncertainty.  Data represents 493 
    24-month leases from MBALT 2017-A with an observation window 
    of 6 months.  The
    hazard rates and 95\% confidence intervals fitted using 
    Theorem~\ref{thm:haz_norm} (see also Section~\ref{subsec:price}).
    The distribution of price point estimates was created with 
    4{,}000 replications.  The actual observed value and formulaic
    price $(\kappa = 3)$ are denoted within the figure.}
    \label{fig:APV_dist}
\end{figure}

\section{Conclusion}
\label{sec:conclusion}

Despite life insurers significant holdings in securitized assets including
ABS, it is difficult to find studies on this important fixed-income asset
class within the actuarial literature.  Further, current market pricing
techniques for these assets either rely on a non-random time-to-event model
or may not utilize the full asset-level disclosures of SEC Regulation AB II
\cite{reg_ab2}, which took effect in 2016. Our work fills this gap by
establishing an effective pricing process that makes use of discrete-time
survival analysis estimation techniques for incomplete data.

Broadly, there are two contributions of this article.  The first contribution
is the rigorous exposition of a framework capable of handling left-truncation
and right-censoring in discrete-time in Section~\ref{sec:prelim}.
This was necessary to derive the
asymptotic properties of the hazard rate estimators assuming a discrete $X$,
which we have done in Theorem~\ref{thm:haz_norm} (statement in
Section~\ref{sec:est} and proof in Appendix~\ref{sec:asym_proof}) for the
first time in the case of left-truncation and right-censoring in the
statistical literature.  Note that this is a generalization of the
results of \citet{lautier_2021}, which was only valid for the case of
left-truncation.  The second contribution is the pricing formula of
Theorem~\ref{thm:apv_varapv} (statement in Section~\ref{sec:cash_flow} and
proof in Appendix~\ref{subsec:thm_apv}), which is effectively an expected or
actuarial present value that relies on the lifetime random variable
distribution.  The two contributions come together in that the distribution of
the lifetime random variable may be estimated using the results of
Section~\ref{sec:est}, and Theorem~\ref{thm:haz_norm}, under certain settings,
may be used to assess the
potential uncertainty of the pricing point estimator in
Theorem~\ref{thm:apv_varapv}.  We also provide a discussion of when our model
framework --- in particular the key independence assumptions --- is appropriate
and inappropriate in Section~\ref{subsec:understand}.  It is our opinion that
the independence assumptions are reasonable in many realistic scenarios, but
not all, such as with securitization pools of subprime credits.

The theoretical results of this paper are applied to a subset of
29{,}845 leases from the Mercedes-Benz 2017-A auto lease ABS bond in
Section~\ref{sec:app}. Specifically in Section~\ref{sec:app}, we first
provide discussion about why this particular bond meets both of the important
independence assumptions of Section~\ref{subsec:understand}.  Next,
we found our formulaic pricing method, which accounts for the variability of
the lease lifetime distribution, was capable of outperforming the standard
modeling method within an auto lease ABS prospectus \citep{mercedes_2017},
which assumes a non-random lifetime distribution
(Section~\ref{subsec:price}, Table~\ref{tab:apv_results}).  Our illustrative
application closed with a demonstration of how to use the asymptotic results
of Theorem~\ref{thm:haz_norm} to assess the price point estimator uncertainty
of Theorem~\ref{thm:apv_varapv} (Section~\ref{subsec:quant_est}).

We recognize that other model formulations may attempt
to connect economic variables to credit modeling at the loan level
\citep{deng_2000}.  These
models typically find some association between consumer behavior and rational
market behavior, such as the connection between prepayment behavior and the
implicit ``in-the-money" level a borrower finds himself with respect to his
loan and home value.  While some option-based models do have explanatory power,
they are often not enough to fully explain the significant heterogeneity
exhibited by borrowers \citep{deng_2000}.
We have thus elected to use a data-based approach that models when lessees
decide to terminate their leases rather than attempt to explain why lessees
terminate their leases.  The ``when" is of paramount importance in a cash flow
pricing exercise, and we find success with this approach in the realistic
setting of Section~\ref{sec:app}.
Fundamentally, therefore, our model produces a random
timing and amount of cash flows at the individual lease contract level based on
distributional estimations from historical performance.  The most important of
these distributions, the random time-until-contract-termination, may be
estimated using the techniques of Section~\ref{sec:est}.

That said, we acknowledge that the ``why" of consumer lessee
behavior also has its merits, particularly in economically driven research.
We postulate that a future version of the estimator \eqref{eq:est_haz} may be
generalized to incorporate covariates, which could blend the ``when" and ``why"
into a single model.  Of particular interest may be the connection between
current market interest rates and consumer behavior, especially for models of
consumer loans or residential mortgages.  At present, we leave
this problem open to future research.

The results of this paper may be suitable for other forms of ABS besides auto
lease bonds.  One ideal generalization is agency MBS, which accounts for about
\$250 billion of life insurer assets \citep{mcmenamin_2013}.
We recommend a generalization of the
estimation framework of Sections~\ref{sec:prelim} and \ref{sec:est} to the case
of competing risks for securitized assets in which an investor would prefer to
treat prepayment and default separately.  Additional details may be found in
Section~\ref{subsec:understand}.  It is our
opinion the combined estimation and pricing framework of this paper may be of
use for insurance products, too, though we also leave this problem open for 
future research.

\bibliographystyle{mcap}
\bibliography{ime}

\appendix

\section{Proofs}
\label{sec:proofs}

We first prove Theorem~\ref{thm:haz_norm} and subsequently
prove Theorem~\ref{thm:apv_varapv}.

\subsection{Proof of Theorem~\ref{thm:haz_norm}}
\label{sec:asym_proof}

We first define some helpful notation:
\begin{align}
    u_{\tau}(k,k') &= \Pr(Y_i \leq k \leq \min(X_i, C_i),
    Y_i \leq k' \leq \min(X_i, C_i)) \nonumber\\
    &= \Pr(Y_i \leq \min(k,k'), \max(k,k') \leq X_i,
    \max(k,k') \leq C_i) \nonumber\\
    &= \Pr(Y \leq \min(k,k'), X \geq \max(k,k'), C \geq \max(k,k')
    \mid X \geq Y) \nonumber\\
    &= \Pr(Y \leq \min(k,k'), X \geq \max(k,k'), C \geq \max(k,k'),
    X \geq Y) / \Pr(X \geq Y) \nonumber\\
    &= \Pr(Y \leq \min(k,k'), X \geq \max(k,k'), C \geq \max(k,k'))
    / \alpha \nonumber\\
    &= \frac{1}{\alpha} \Pr(X \geq \max(k,k')) \Pr(Y \leq \min(k,k'),
    \max(k,k') \leq C). \label{eq:c_kk}
\end{align}
Notice $u_{\tau}(k,k') = u_{\tau}(k',k)$ and $u_{\tau}(k,k) = U_{\tau}(k)$.
Further,
\begin{align}
    r_{\tau}(k,k') &= \Pr(X_i = \max(k,k'), Y_i \leq \min(k,k'),
    X_i \leq C_i) \nonumber\\
    &= \Pr(X = \max(k,k'), Y \leq \min(k,k'), X \leq C \mid X \geq Y)
    \nonumber\\
    &= \Pr(X = \max(k,k'), Y \leq \min(k,k'), X \leq C, X \geq Y) /
    \Pr(X \geq Y) \nonumber\\
    &= \Pr(X = \max(k,k'), Y \leq \min(k,k'), C \geq \max(k,k')) /
    \alpha \nonumber\\
    &= \frac{1}{\alpha} \Pr(X = \max(k,k'))
    \Pr(Y \leq \min(k,k'), \max(k,k') \leq C). \label{eq:r_kk}
\end{align}
Notice $r_{\tau}(k,k') = r_{\tau}(k',k)$ and
$r_{\tau}(k,k) = f_{*,\tau}(k)$.  We first state a lemma, and the
proof of Theorem~\ref{thm:haz_norm} follows.

\begin{Lemma}[$\hat{\mathbf{U}}_{\tau,n}$ Asymptotic Properties]
Define $\hat{\mathbf{U}}_{\tau,n} =
\big(\hat{U}_{\tau,n}(\Delta+1), \ldots, \hat{U}_{\tau,n}(\xi) \big)^{\top}$,
where $\hat{U}_{\tau,n}$ follows from \eqref{eq:est_haz}. Then,
\begin{enumerate}[label=(\roman*)]
	\item \begin{equation*}
			\hat{\mathbf{U}}_{\tau,n}
			\overset{\mathcal{P}}{\longrightarrow}
			\mathbf{U}_{\tau},
			\text{ as } n \rightarrow \infty;
		  \end{equation*}
	\item \begin{equation*}
    			\sqrt{n} ( \hat{\mathbf{U}}_{\tau,n} - \mathbf{U}_{\tau} )
    			\overset{\mathcal{L}}{\longrightarrow}
    			N( \mathbf{0}, \mathbf{\Sigma}_u),
    			\text{ as } n \rightarrow \infty,
		  \end{equation*}
\end{enumerate}
where $\mathbf{U}_{\tau} = (U_{\tau}(\Delta + 1), \ldots, U_{\tau}( \xi ))^{\top}$
with $U_{\tau}$ as defined in \eqref{eq:c_tau} and $\bm{\Sigma}_u$ is
a covariance matrix $\lVert \sigma_{k',k} \rVert$ such that
\begin{equation*}
	\sigma_{k',k} = \begin{cases} U_{\tau}(k)[1 - U_{\tau}(k)], & k' = k\\
	u_{\tau}(k',k) - U_{\tau}(k')U_{\tau}(k), & k' \neq k
	\end{cases},
\end{equation*}
for $k', k = \Delta + 1, \ldots, \xi$.
\label{lem:C_hat}
\end{Lemma}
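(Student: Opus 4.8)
The plan is to recognize the estimator vector $\hat{\mathbf{C}}_{\tau,n}$ as a sample mean of independent and identically distributed random vectors and then invoke the multivariate laws of large numbers and central limit theorem directly. Concretely, for each observation $i$ I would define the indicator vector $\mathbf{W}_i = (W_i(\Delta+1), \ldots, W_i(\xi))^T$, where $W_i(k) = \mathbf{1}_{Y_i \leq k \leq \min(X_i, C_i)}$. Because the pairs $(X_i, Y_i)$ are sampled i.i.d. from the already-truncated population, the vectors $\mathbf{W}_1, \ldots, \mathbf{W}_n$ are i.i.d., and from \eqref{eq:est_haz} we may write $\hat{\mathbf{C}}_{\tau,n} = n^{-1} \sum_{i=1}^n \mathbf{W}_i$. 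Each coordinate $W_i(k)$ is a bounded Bernoulli random variable, so all moments exist and integrability is automatic; the only content is to identify the first two moments of $\mathbf{W}_1$.

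For part $(i)$, I would first verify $\mathbf{E}[W_i(k)] = \Pr(Y_i \leq k \leq \min(X_i, C_i)) = C_{\tau}(k)$, which is exactly \eqref{eq:c_tau}. The multivariate weak law of large numbers then gives $\hat{\mathbf{C}}_{\tau,n} \overset{\mathcal{P}}{\longrightarrow} \mathbf{E}[\mathbf{W}_1] = \mathbf{C}_{\tau}$. For part $(ii)$, the multivariate central limit theorem yields $\sqrt{n}(\hat{\mathbf{C}}_{\tau,n} - \mathbf{C}_{\tau}) \overset{\mathcal{L}}{\longrightarrow} N(\mathbf{0}, \bm{\Sigma}_c)$ with $\bm{\Sigma}_c = \text{Cov}(\mathbf{W}_1)$, so it remains only to identify the entries. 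The diagonal follows from the Bernoulli variance $\text{Var}(W_i(k)) = C_{\tau}(k)[1 - C_{\tau}(k)]$, matching $\sigma_{k,k}$. For the off-diagonal entries, the key observation is that $W_i(k')W_i(k) = \mathbf{1}_{Y_i \leq k' \leq \min(X_i,C_i)}\,\mathbf{1}_{Y_i \leq k \leq \min(X_i,C_i)}$ is itself the indicator of the joint event defining $c_{\tau}(k',k)$ in \eqref{eq:c_kk}, so $\mathbf{E}[W_i(k')W_i(k)] = c_{\tau}(k',k)$ and hence $\text{Cov}(W_i(k'), W_i(k)) = c_{\tau}(k',k) - C_{\tau}(k')C_{\tau}(k)$, matching $\sigma_{k',k}$.

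The main obstacle here is not analytic—integrability and the i.i.d.\ structure are immediate for bounded indicators—but rather the careful bookkeeping in the last step: confirming that the product of the two indicators collapses to the single joint-event indicator whose probability is precisely $c_{\tau}(k',k)$, and that the resulting covariance entries line up exactly with the stated $\bm{\Sigma}_c$. This verification, together with the symmetry $c_{\tau}(k',k) = c_{\tau}(k,k')$ noted immediately after \eqref{eq:c_kk}, completes the identification of the limiting covariance and hence both parts of the lemma.
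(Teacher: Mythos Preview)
Your proposal is correct and mirrors the paper's proof almost exactly: the paper likewise writes $\hat{\mathbf{C}}_{\tau,n}$ as the sample mean of i.i.d.\ indicator vectors $Y_{\tau,k(i)}=\mathbf{1}_{Y_i\le k\le\min(X_i,C_i)}$, identifies their Bernoulli mean $C_\tau(k)$ and cross-moment $c_\tau(k',k)$ via the same product-of-indicators reduction, and applies the multivariate CLT. The only cosmetic difference is that the paper derives (i) as a corollary of (ii) rather than invoking the weak law separately.
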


\begin{proof}
Statement \textit{(i)} follows from \textit{(ii)}, so it is left to show
\textit{(ii)}. Observe
\begin{equation*}
  \hat{\mathbf{U}}_{\tau,n}
  = \begin{bmatrix}
  \hat{U}_{\tau,n}(\Delta + 1) \\ \vdots \\ \hat{U}_{\tau,n}(\xi)
  \end{bmatrix}
  = \begin{bmatrix}
  	\displaystyle \frac{1}{n} \sum_{i=1}^{n}
    \mathbf{1}_{Y_i \leq \Delta + 1 \leq \min(X_i,C_i)} \\
    \vdots \\
    \displaystyle \frac{1}{n} \sum_{i=1}^{n}
    \mathbf{1}_{Y_i \leq \xi \leq \min(X_i,C_i)}
    \end{bmatrix}
  =\frac{1}{n} \sum_{i=1}^{n}
  \begin{bmatrix}
  	Y_{\tau,\Delta + 1 (i)} \\
    \vdots \\
    Y_{\tau, \xi (i)}
  \end{bmatrix},
\end{equation*}
where $Y_{\tau, k(i)}$, $\Delta + 1 \leq k \leq \xi$ are independent and
identically distributed Bernoulli random variables with probability of
success given by
\begin{equation*}
\Pr(Y_i \leq k \leq \min(X_i,C_i)) =
\Pr(Y \leq k \leq \min(X,C) \mid X \geq Y) = U_{\tau}(k),
\end{equation*}
for $k = \Delta + 1, \ldots, \xi$. Thus, $E[Y_{\tau, k(i)}] = U_{\tau}(k)$ and
$\text{Var}[Y_{\tau, k(i)}] = U_{\tau}(k) (1 - U_{\tau}(k))$.  Now, since
\begin{equation*}
  \mathbf{1}_{Y_i \leq k' \leq \min(X_i, C_i)}
   \mathbf{1}_{Y_i \leq k \leq \min(X_i,C_i)} =
   \mathbf{1}_{Y_i \leq \min(k',k), X_i \geq \max(k',k), C_i \geq \max(k',k)},
\end{equation*}
we have
\begin{equation}
  E[Y_{\tau, k'(i)} Y_{\tau, k(i)}] =
  E[\mathbf{1}_{Y_i \leq \min(k',k), X_i \geq \max(k',k), C_i \geq \max(k',k)}]
  = u_{\tau}(k',k), \label{eq:EY2}
\end{equation}
for $k', k = \Delta + 1, \ldots, \xi$.  Thus,
\begin{align*}
    \text{Cov}[Y_{\tau, k'(i)} Y_{\tau, k(i)} ]
    &= E[Y_{\tau, k'(i)} Y_{\tau, k(i)}] -
    E[Y_{\tau, k'(i)}]E[Y_{\tau, k(i)}]\\
    &= u_{\tau}(k',k) - U_{\tau}(k')U_{\tau}(k).
\end{align*}
Recall that~\eqref{eq:EY2} reduces to $U_{\tau}(k)$ when $k' = k$.  Use the
multivariate Central Limit Theorem \citep[][Theorem 8.21, pg. 61]{lehmann_1998}
to complete the proof.
\end{proof}

We now prove Theorem~\ref{thm:haz_norm}.

\begin{proof}
Statement \textit{(i)} follows from \textit{(ii)}, so it is left to show
\textit{(ii)}.  Let $\Delta + 1 \leq k \leq \xi$ and observe
$\mathbf{1}_{X_i \leq C_i} \mathbf{1}_{\min(X_i, C_i) = k} =
\mathbf{1}_{X_i = k, X_i \leq C_i}$ and so
\begin{align*}
    \hat{\lambda}_{\tau,n}(k) - \lambda_{\tau}(k) &=
    \frac{ \frac{1}{n} \sum_{i=1}^{n} \mathbf{1}_{X_i = k, X_i \leq C_i}}
    {\hat{U}_{\tau, n}(k)} - \frac{ f_{*, \tau}(k) }{ U_{\tau}(k) }\\
    &= \frac{ \frac{1}{n} \sum_{i=1}^{n} \mathbf{1}_{X_i = k, X_i \leq C_i}
    U_{\tau}(k) -
    f_{*, \tau}(k)\hat{U}_{\tau,n}(k) }{ \hat{U}_{\tau,n}(k)U_{\tau}(k)}\\
    &= \bigg[ \frac{1}{\hat{U}_{\tau,n}(k)U_{\tau}(k)} \bigg] \frac{1}{n}
    \sum_{i=1}^{n} \{ \mathbf{1}_{X_i=k,X_i \leq C_i} U_{\tau}(k) -
    f_{*,\tau}(k) \mathbf{1}_{Y_i \leq k \leq \min(X_i,C_i)} \}.
\end{align*}
Further define
\begin{equation*}
    Z_{\tau, k(i)} = \mathbf{1}_{X_i=k,X_i \leq C_i} U_{\tau}(k) -
    f_{*,\tau}(k) \mathbf{1}_{Y_i \leq k \leq \min(X_i,C_i)}.
\end{equation*}
Hence,
\begin{equation*}
\hat{\bm{\Lambda}}_{\tau,n} - \bm{\Lambda}_{\tau} =
\mathbf{A}_{\tau,n}  \frac{1}{n} \sum_{i=1}^{n}
\begin{bmatrix}
Z_{\tau, \Delta + 1 (i)} \\ \vdots \\ Z_{\tau, \xi(i)}
\end{bmatrix},
\end{equation*}
where
$\mathbf{A}_{\tau,n} =
\text{diag}([\hat{U}_{\tau,n}(\Delta+1)U_{\tau}(\Delta+1)]^{-1},
\ldots, [\hat{U}_{\tau,n}(\xi)U_{\tau}(\xi)]^{-1})$.  That is,
\begin{equation*}
\hat{\bm{\Lambda}}_n - \bm{\Lambda} = \mathbf{A}_{\tau,n}  \frac{1}{n}
\sum_{i=1}^{n} \mathbf{Z}_{\tau,(i)},
\end{equation*}
where $\mathbf{Z}_{\tau,(i)} =
(Z_{\tau,\Delta+1(i)}, \ldots, Z_{\tau,\xi(i)})^{\top}$,
$1 \leq i \leq n$ are independent and identically distributed random vectors.
We will also subsequently show that the components of random vector
$\mathbf{Z}_{\tau,(i)}$ are uncorrelated.

First notice $\mathbf{1}_{X_i = k, X_i \leq C_i}$ is a Bernoulli random
variable with probability parameter $f_{*, \tau}(k)$.  Similarly,
$\mathbf{1}_{Y_i \leq k \leq \min(X_i, C_i)}$ is a Bernoulli random
variable with probability parameter $U_{\tau}(x)$. Thus,
\begin{align*}
    E[Z_{\tau, k(i)}] &= E[ \mathbf{1}_{X_i=k,X_i \leq C_i}] U_{\tau}(k) -
    f_{*,\tau}(k) E[ \mathbf{1}_{Y_i \leq k \leq \min(X_i,C_i)}]\\
    &= f_{*, \tau}(k) U_{\tau}(k) - f_{*,\tau}(k) U_{\tau}(k)\\
    &= 0.
\end{align*}
We now show
\begin{equation}
    \text{Cov}[Z_{k(i)}, Z_{k'(i)}] = \begin{cases}
    U_{\tau}(k) f_{*, \tau}(k) [U_{\tau}(k) - f_{*, \tau}(k)], &
    k = k'\\
    0, & k \neq k'. \end{cases}
    \label{eq:cov_Z}
\end{equation}
Since $E[Z_{\tau, k(i)}]=0$, we have
\begin{align*}
    \text{Cov}[Z_{\tau,k(i)}, Z_{\tau,k'(i)}] ={}& E[Z_{\tau,k(i)}Z_{\tau,k'(i)}]\\
    ={}& E \bigg[ \bigg( \mathbf{1}_{X_i=k,X_i \leq C_i}
    U_{\tau}(k) - f_{*,\tau}(k) \mathbf{1}_{Y_i \leq k \leq \min(X_i,C_i)} \bigg)\\
    & \bigg( \mathbf{1}_{X_i=k',X_i \leq C_i} U_{\tau}(k') -
    f_{*,\tau}(k') \mathbf{1}_{Y_i \leq k' \leq \min(X_i,C_i)} \bigg) \bigg]\\
    ={}& U_{\tau}(k)U_{\tau}(k') E[ \mathbf{1}_{X_i=k,X_i \leq C_i}
    \mathbf{1}_{X_i=k',X_i \leq C_i} ]\\
    &- f_{*, \tau}(k)U_{\tau}(k') E[ \mathbf{1}_{Y_i \leq k \leq \min(X_i,C_i)}
    \mathbf{1}_{X_i=k',X_i \leq C_i}]\\
    &- f_{*, \tau}(k')U_{\tau}(k) E[ \mathbf{1}_{Y_i \leq k' \leq \min(X_i,C_i)}
    \mathbf{1}_{X_i=k,X_i \leq C_i}]\\
    &+ f_{*,\tau}(k)f_{*,\tau}(k')
    E[ \mathbf{1}_{Y_i \leq k \leq \min(X_i,C_i)}
    \mathbf{1}_{Y_i \leq k' \leq \min(X_i,C_i)}]
\end{align*}
We proceed by cases.

Case 1: $k = k'$.

Since $\mathbf{1}_{X_i=k,X_i \leq C_i}\mathbf{1}_{X_i=k',X_i \leq C_i} =
\mathbf{1}_{X_i=k,X_i \leq C_i}$, $E[\mathbf{1}_{X_i=k,X_i \leq C_i}
\mathbf{1}_{X_i=k',X_i \leq C_i}] = f_{*, \tau}(k)$.  Additionally,
\begin{align*}
    \mathbf{1}_{Y_i \leq k \leq \min(X_i,C_i)}\mathbf{1}_{X_i=k',X_i \leq C_i}
    &= \mathbf{1}_{Y_i \leq k' \leq \min(X_i,C_i)}
    \mathbf{1}_{X_i=k,X_i \leq C_i}\\
    &= \mathbf{1}_{Y_i \leq k \leq \min(X_i, C_i), X_i = k, X_i \leq C_i}\\
    &= \mathbf{1}_{Y_i \leq X_i \leq X_i, X_i = k, X_i \leq C_i}\\
    &= \mathbf{1}_{X_i = k, X_i \leq C_i}.
\end{align*}
Therefore,
\begin{equation*}
    E[ \mathbf{1}_{Y_i \leq k \leq \min(X_i,C_i)}
    \mathbf{1}_{X_i=k',X_i \leq C_i}] =
    E[ \mathbf{1}_{Y_i \leq k' \leq \min(X_i,C_i)}
    \mathbf{1}_{X_i=k,X_i \leq C_i}] = f_{*, \tau}(k).
\end{equation*}
Finally,
\begin{equation*}
    \mathbf{1}_{Y_i \leq k \leq \min(X_i,C_i)}
    \mathbf{1}_{Y_i \leq k' \leq \min(X_i,C_i)} =
    \mathbf{1}_{Y_i \leq k \leq \min(X_i,C_i)}.
\end{equation*}
Thus, $E[ \mathbf{1}_{Y_i \leq k \leq \min(X_i,C_i)}
\mathbf{1}_{Y_i \leq k' \leq \min(X_i,C_i)}] = U_{\tau}(k)$.  Replace these
expectations in $E[Z_{k(i)} Z_{k'(i)}]$ to write
\begin{equation*}
    \text{Cov}[Z_{k(i)}, Z_{k'(i)}] = U_{\tau}(k)f_{*,\tau}(k)[ U_{\tau}(k) -
    f_{*, \tau}(k)].
\end{equation*}

Case 2: $k \neq k'$.

Since $\mathbf{1}_{X_i=k,X_i \leq C_i}\mathbf{1}_{X_i=k',X_i \leq C_i} = 0$,
$E[\mathbf{1}_{X_i=k,X_i \leq C_i}\mathbf{1}_{X_i=k',X_i \leq C_i}]=0$.  Assume
$k < k'$.  Then
\begin{align*}
    \mathbf{1}_{Y_i \leq k \leq \min(X_i, C_i)} \mathbf{1}_{X_i=k',X_i\leq C_i}
    &= \mathbf{1}_{Y_i \leq k \leq \min(X_i,C_i), X_i = k', X_i \leq C_i}\\
    &= \mathbf{1}_{Y_i \leq k, X_i = k', X_i \leq C_i}.
\end{align*}
Therefore, $E[\mathbf{1}_{Y_i \leq k \leq \min(X_i, C_i)}
\mathbf{1}_{X_i=k',X_i\leq C_i} ] = \Pr(X_i = k', Y_i \leq k,
X_i \leq C_i)$.  Further, when $k < k'$
\begin{equation*}
    \mathbf{1}_{Y_i \leq k' \leq \min(X_i, C_i)} \mathbf{1}_{X_i=k,X_i\leq C_i}
    = \mathbf{1}_{Y_i \leq k' \leq \min(X_i, C_i), X_i = k, X_i \leq C_i} = 0.
\end{equation*}
Thus, $E[\mathbf{1}_{Y_i \leq k' \leq \min(X_i, C_i)}
\mathbf{1}_{X_i=k,X_i\leq C_i}] = 0$.  Now, if instead $k > k'$, then by symmetry,
\begin{equation*}
E[\mathbf{1}_{Y_i \leq k' \leq \min(X_i, C_i)} \mathbf{1}_{X_i=k,X_i\leq C_i}] =
\Pr(X_i = k, Y_i \leq k', X_i \leq C_i),
\end{equation*}
and $E[ \mathbf{1}_{Y_i \leq k \leq \min(X_i, C_i)}
\mathbf{1}_{X_i=k',X_i\leq C_i}] = 0$.  Thus, we can generalize and claim
\begin{align*}
    & f_{*, \tau}(k)U_{\tau}(k')E[ \mathbf{1}_{Y_i \leq k \leq \min(X_i, C_i)} \mathbf{1}_{X_i=k',X_i\leq C_i} ]
    + f_{*, \tau}(k')U_{\tau}(k)E[ \mathbf{1}_{Y_i \leq k \leq \min(X_i, C_i)}
    \mathbf{1}_{X_i=k',X_i\leq C_i}]\\
    =& f_{*, \tau}(\min(k,k')) U_{\tau}(\max(k,k')) r_{\tau}(k,k').
\end{align*}
Lastly, notice $E[ \mathbf{1}_{Y_i \leq k \leq \min(X_i,C_i)}
\mathbf{1}_{Y_i \leq k' \leq \min(X_i,C_i)}] = u_{\tau}(k,k')$.  Replace these
expectations in $E[Z_{k(i)}Z_{k'(i)}]$ to write
\begin{align*}
    \text{Cov}(Z_{k(i)}, Z_{k'(i)})
    ={}& -f_{*,\tau}(\min(k,k')) U_{\tau}(\max(k,k')) r_{\tau}(k,k')\\
    &+ f_{*,\tau}(\min(k,k')) f_{*,\tau}(\max(k,k')) u_{\tau}(k,k')\\
    ={}& f_{*,\tau}(\min(k,k')) \{ f_{*,\tau}(\max(k,k')) u_{\tau}(k,k') -
    r_{\tau}(k,k') U_{\tau}(\max(k,k')) \}.
\end{align*}
However, using \eqref{eq:c_kk} and \eqref{eq:r_kk},
\begin{align*}
    f_{*,\tau}(\max(k,k')) u_{\tau}(k,k') &=
    \bigg[ \frac{ \Pr(X = \max(k,k')) \Pr(Y \leq \max(k,k') \leq C) }
    { \alpha } \bigg]\\
    &\times \Bigg[\frac{ \Pr(X \geq \max(k,k')) \Pr(Y \leq \max(k,k'),
    \max(k,k') \leq C)}
    {\alpha} \bigg]\\
    &= \bigg[ \frac{ \Pr(X=\max(k,k')) \Pr(Y \leq \min(k,k'),
    \max(k,k') \leq C)}{\alpha} \bigg]\\
    & \times \bigg[ \frac{ \Pr(X \geq \max(k,k')) \Pr(Y \leq \max(k,k')
    \leq C)}{\alpha} \bigg]\\
    &= r_{\tau}(k,k') U_{\tau}(\max(k,k')).
\end{align*}
Thus, $\text{Cov}[Z_{k(i)}, Z_{k'(i)}] = 0$ when $k \neq k'$.  This confirms
\eqref{eq:cov_Z}.  Now define
\begin{equation*}
\mathbf{D}_{\tau} = \text{diag}\big(
U_{\tau}(\Delta+1)f_{*,\tau}(\Delta+1)
[U_{\tau}(\Delta+1) - f_{*,\tau}(\Delta+1)], \ldots,
U_{\tau}(\xi)f_{*,\tau}(\xi)
[U_{\tau}(\xi) - f_{*,\tau}(\xi)] \big),
\end{equation*}
and
\begin{equation*}
\bar{\mathbf{Z}}_{\tau,n} =
\frac{1}{n} \sum_{i=1}^{n} \mathbf{Z}_{\tau,(i)}.
\end{equation*}
Thus, by the multivariate Central Limit Theorem
\citep[][Theorem 8.21, pg. 61]{lehmann_1998} we may claim
\begin{equation*}
\sqrt{n}(\bar{\mathbf{Z}}_{\tau,n} - \bm{0})
\overset{\mathcal{L}}{\longrightarrow} N(\bm{0},\mathbf{D}_{\tau}),
\text{ as } n \rightarrow \infty.
\end{equation*}
Next define
\begin{equation*}
\mathbf{V}_{\tau} = \text{diag} \big(
U_{\tau}(\Delta+1)^{-2}, \ldots, U_{\tau}(\xi)^{-2}),
\end{equation*}
and use Lemma~\ref{lem:C_hat} to claim
$\mathbf{A}_{\tau,n} \overset{\mathcal{P}}{\longrightarrow} \mathbf{V}_{\tau}$,
as $n \rightarrow \infty$.  Therefore, by multivariate Slutsky's Theorem
\citep[][Theorem 5.1.6, pg. 283]{lehmann_1998},
\begin{equation*}
\sqrt{n} \big( \mathbf{A}_{\tau,n} \bar{\mathbf{Z}}_{\tau,n} \big)
\overset{\mathcal{L}}{\longrightarrow}
N(\bm{0}, \mathbf{V}_{\tau} \mathbf{D}_{\tau} \mathbf{V}_{\tau}^{\top})
\text{ as } n \rightarrow \infty.
\end{equation*}
Finally, observe
$\mathbf{V}_{\tau} \mathbf{D}_{\tau} \mathbf{V}_{\tau}^{\top} = \bm{\Sigma}$
and $\mathbf{A}_{\tau,n} \bar{\mathbf{Z}}_{\tau,n} =
\hat{\bm{\Lambda}}_{\tau,n} - \bm{\Lambda}_{\tau}$ to complete the
proof.
\end{proof}

\subsection{Proof of Theorem~\ref{thm:apv_varapv}}
\label{subsec:thm_apv}

%%%%%%%%%%%%%%%%%%%%%%%%%%%%%%%%%%%%%%%%%%%%%%%%%%%%%
%%%%%%%%%%%%%%%%%%%%%%%%%%%%%%%%%%%%%%%%%%%%%%%%%%%%%
% THEOREM PROOF
%%%%%%%%%%%%%%%%%%%%%%%%%%%%%%%%%%%%%%%%%%%%%%%%%%%%%
%%%%%%%%%%%%%%%%%%%%%%%%%%%%%%%%%%%%%%%%%%%%%%%%%%%%%
\begin{proof}
By repeated use of law of total expectation
\cite[][Theorem 3.3.1, pg. 112]{nitis_2000} and \eqref{eq:PV},
\begin{equation}
    \text{APV}_i = \mathbf{E} [\text{PV}_i]
    = \mathbf{E}_X \{
    \mathbf{E}_D [
    \mathbf{E}_Z( \text{PV}_i \mid X_i, D_{X_i})
    \mid X_i] \}.
    \label{eq:APV_i}
\end{equation}
Now,
\begin{equation*}
    \mathbf{E}_Z( \text{PV}_i \mid X_i, D_{X_i})
    = W_i(X_i, D_{X_i}) + R_i(X_i) \mathbf{E}(Z \mid X = X_i).
\end{equation*}
Thus,
\begin{align*}
    \mathbf{E}_D [ \mathbf{E}_Z( \text{PV}_i \mid X_i, D_{X_i}) \mid X_i]
    &= \mathbf{E}_D
    \bigg[ W_i(X_i, D_{X_i}) + R_i(X_i) \mathbf{E}(Z \mid X = X_i)
    \bigg{|} X_i \bigg]\\
    &= \mathbf{E}_D [W_i(X_i, D_{X_i}) \mid X_i]
    + R_i(X_i) \mathbf{E}(Z \mid X = X_i)\\
    &= \bigg{ \{ }
    \sum_{k=0}^{\varphi} W_i(X_i, k) \Pr^*(D_{m} = k)
    \bigg{ \} }
    + R_i(X_i) \mathbf{E}(Z \mid X = X_i).
\end{align*}
Hence, returning to \eqref{eq:APV_i}
\begin{align*}
    \text{APV}_i &=
    \mathbf{E}_X \{ \mathbf{E}_D
    [\mathbf{E}_Z( \text{PV}_i \mid X_i, D_{X_i}) \mid X_i] \}\\
    &=
    \mathbf{E}_X \bigg[
    \bigg{ \{ }
    \sum_{k=0}^{\varphi} W_i(X_i, k) \Pr^*(D_{m} = k)
    \bigg{ \} }
    + R_i(X_i) \mathbf{E}(Z \mid X = X_i)
    \bigg]\\
    &=
    \sum_{m=x_{\varepsilon(i)}}^{\xi}
    \bigg(
    \bigg{ \{ }
    \sum_{k = 0}^{\varphi }
    W_i(m,k) \Pr^*(D_{m} = k)
    \bigg{ \} }
    + R_i(m) \mathbf{E}(Z \mid X = m)
    \bigg) p_{x_{\varepsilon(i)}}^{m}.
\end{align*}
The proof is complete by the linear property of expectations
\citep[][Theorem 3.3.2, pg. 116]{nitis_2000}.
\end{proof}

\section{Extended application details}
\label{sec:app_detail}

The following information pertains to Section~\ref{sec:app}.

\subsubsection*{Observed data irregularities}

Per the opening discussion of Section~\ref{subsec:price}, some records were
removed due to difficulties interpreting the reported cash flows.  In this
section, we present examples of two such records in
Table~\ref{tab:bad_samp_life}.

For example asset number~3, %399553990171748
we can see that there are two positive
payments in the residual column (\texttt{liquidationProceedsAmount}). While
it appears the smaller payment of 3{,}728 at age 37 represents approximately
nine monthly payments of 417, it is not clearly indicated in the data.
Modeling such a lease contract will require making assumptions about
the payment of 3{,}728 that we cannot verify.  Instead, as the model of
Section~\ref{sec:cash_flow} is a new proposal, we have attempted to remove the
potential confounding effects of data irregularities and focused on
demonstrating the model is capable of pricing cash flows that follow a clear
pattern, such as those in Table~\ref{tab:samp_life}.

For example asset number~4, %709383036253098,
we see two large residual payments at ages
32 and 40.  Further, we see that the termination indicator
(\texttt{terminationIndicator}) occurs at age 35, despite the large residual
payment occurring later, at age 40.  This cash flow pattern
is very difficult to translate accurately from the reported data into a
lease contract outcome without more information.

\begin{center}
\begin{table}[tbh]
    \centering
    \begin{tabular}{ccccccccc}
    & \multicolumn{4}{c}{Ex. Asset Num: 3} & %399553990171748} &
    \multicolumn{4}{c}{Ex. Asset Num: 4}\\ %709383036253098}\\
    %\midrule
    \cmidrule(lr){1-1} \cmidrule(lr){2-5} \cmidrule(lr){6-9}
    %\toprule
    Obs. Month & Age & Pmt & Resid. & Term. Ind. &
    Age & Pmt & Resid. & Term. Ind.\\
    \cmidrule(lr){1-1} \cmidrule(lr){2-5} \cmidrule(lr){6-9}
    1 & 30 & 834 & 0 & -- & 30 & 4{,}315 & 0 & --\\
    2 & 31 & 417 & 0 & -- & 31 & 0 & 0 & --\\
    3 & 32 & 417 & 0 & -- & 32 & 2{,}157 & 20{,}194 & --\\
    4 & 33 & 417 & 0 & -- & 33 & 0 & 0 & --\\
    5 & 34 & 417 & 0 & -- & 34 & 0 & 0 & --\\
    6 & 35 & 0 & 0 & -- & 35 & 0 & 0 & 1\\
    7 & 36 & 807 & 0 & -- & 36 & 0 & 0 & --\\
    8 & 37 & 416 & 3{,}728 & -- & 37 & 0 & 0 & -- \\
    9 & 38 & 0 & 0 & -- & 38 & 0 & 0 & --\\
    10 & 39 & 0 & 14{,}974 & 1 & 39 & 0 & 0 & --\\
    11 & -- & -- & -- & -- & 40 & 0 & 36{,}934 & --\\
    \cmidrule(lr){1-1} \cmidrule(lr){2-5} \cmidrule(lr){6-9}
    \end{tabular}
    \caption{MBALT 2017-A sample lives of cash flow data irregularities}
    \label{tab:bad_samp_life}
\end{table}
\end{center}

\subsubsection*{Interest rate scenarios}

Figure~\ref{fig:spot_rate} presents the two non-zero interest rate scenarios
for the results of Section~\ref{sec:app}.

\begin{figure}[tbh]
    \centering
    \includegraphics{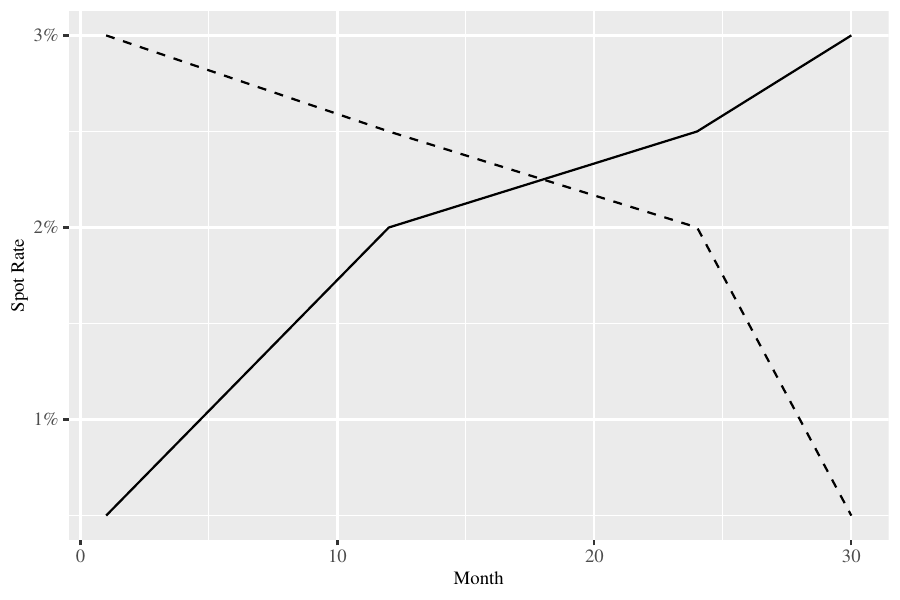}
    \caption{Spot rates for Section~\ref{sec:app}; standard (solid),
    inverted (dashed)}
    \label{fig:spot_rate}
\end{figure}

\end{document}